\documentclass[11pt]{article}

\usepackage[margin=1in]{geometry}
\usepackage[T1]{fontenc}
\usepackage{amsmath,amssymb,amsthm}
\usepackage{mathtools}
\usepackage{graphicx}
\usepackage{natbib}
\usepackage{microtype}
\microtypesetup{expansion=false}
\usepackage{placeins}
\usepackage{float}
\usepackage{array}
\usepackage{url}
\usepackage[hidelinks]{hyperref}
\hypersetup{
  pdftitle={Screening-Off Information and Conditional Risk in Portfolio Choice},
  pdfauthor={Alejandro Rodriguez Dominguez},
  pdfsubject={Original Article submitted to Quantitative Finance}
}
\newcolumntype{P}[1]{>{\raggedright\arraybackslash}p{#1}}

\theoremstyle{plain}
\newtheorem{theorem}{Theorem}[section]
\newtheorem{proposition}[theorem]{Proposition}
\newtheorem{corollary}[theorem]{Corollary}

\theoremstyle{definition}
\newtheorem{definition}[theorem]{Definition}
\newtheorem{assumption}[theorem]{Assumption}

\theoremstyle{remark}

\newenvironment{keywords}{\par\vspace{0.5em}\noindent\textbf{Keywords: }}{\par}
\newenvironment{classcode}{\par\noindent\textbf{JEL Classification: }}{\par\vspace{0.5em}}
\newcommand{\appendices}{\appendix}

\newcommand{\E}{\mathbb{E}}

\newcommand{\R}{\mathbb{R}}
\newcommand{\F}{\mathcal{F}}
\newcommand{\G}{\mathcal{G}}
\newcommand{\Hh}{\mathcal{H}}
\newcommand{\LL}{\mathcal{L}}
\newcommand{\one}{\mathbf{1}}
\newcommand{\Cov}{\operatorname{Cov}}

\newcommand{\diag}{\operatorname{diag}}
\newcommand{\rank}{\operatorname{rank}}

\begin{document}

\title{Screening-Off Information and Conditional Risk in Portfolio Choice}

\author{Alejandro Rodr\'iguez Dom\'inguez\thanks{
Corresponding author. Email: arodriguez@miraltabank.com;
ORCID: \href{https://orcid.org/0000-0002-2400-1097}
{0000-0002-2400-1097}}\\[0.35em]
\small Quantitative Analysis and AI Department, Miralta Finance Bank S.A., Madrid, Spain\\
\small Department of Computer Science, University of Reading, Reading, United Kingdom\\
\small Department of Data and AI, Albert School, Paris, France}

\date{}

\maketitle

\begin{abstract}
Conditional portfolio models estimate risk relative to a chosen information
set, yet rarely test whether that information removes common cross-asset
dependence. When it does not, systematic risk may be treated as idiosyncratic,
distorting portfolios and attainable efficient frontiers. We formulate this
prior problem as screening-off for portfolio choice. A hierarchy separates
causal, distributional and second-moment requirements, while an exact
covariance decomposition distinguishes represented systematic response,
omitted response risk and residual cross-asset dependence. The two
representation errors have different financial implications: omitted response
necessarily overstates attainable mean--variance opportunities, whereas
incomplete screening can distort them in either direction. Exact finite
perturbation identities and non-asymptotic bounds map both errors into changes
in constrained portfolio weights and frontier potential. Controlled experiments
verify the mechanisms under static, heavy-tailed, dynamic and nonlinear
designs. Frozen out-of-sample market tests combine a broad universe of
observable economic and financial drivers with 150 U.S. equities and 17
hedge-fund strategy indices. Compact selected representations materially reduce
residual dependence in both panels; in equities, residual-aware covariance
estimation improves materially on the uncorrected diagonal-residual
restriction while remaining competitive with established covariance
regularizers. The framework provides a falsifiable information
criterion for conditional risk and a direct map from representation failure
to portfolio instability.
\end{abstract}

\begin{keywords}
portfolio optimization; conditioning information; causal separation; screening-off; covariance estimation; mean--variance
\end{keywords}

\begin{classcode}C58; C61; G11\end{classcode}

\noindent\textbf{Word count: }approximately 10,900 words, including references and appendices.

\section{Introduction}\label{sec:introduction}

Conditional information determines the moments and investment opportunities perceived by a portfolio at the decision date \citep{Merton1973,HansenRichard1987,FersonSiegel2001,PenarandaSentana2016}. Most models nevertheless take the conditioning representation as given. High-dimensional implementations may select variables for prediction or realized portfolio performance \citep{Pereira2021}, but neither criterion establishes that the selected information accounts for the common dependence relevant to conditional risk. The unresolved question is prior to covariance regularization and optimization: when is an observable representation adequate for portfolio choice, and what is the financial consequence when it is not?

This question matters because two representations can deliver similar return forecasts while leaving very different dependence in their residuals. Mean--variance portfolios are already sensitive to moment estimation \citep{Michaud1989,BestGrauer1991}; shrinkage and constraints can stabilize allocations \citep{JagannathanMa2003,LedoitWolf2004}, but they do not determine whether common risk has been represented or merely absorbed statistically. Treating residual dependence as idiosyncratic can understate risk and exaggerate the attainable frontier. At the same time, a representation may be entirely adequate for second-moment portfolio choice without establishing full conditional independence or identifying causal variables. A useful theory must therefore distinguish portfolio sufficiency from stronger probabilistic and structural claims.

We address this gap by formulating screening-off as an information criterion for conditional portfolio risk. Causal separation denotes a common-state representation under explicit structural assumptions; distributional screening-off requires conditional factorization of horizon returns; and second-moment separation requires only diagonal conditional residual covariance. The hierarchy connects common-cause and graphical formulations \citep{Reichenbach1956,Pearl2009} to the weaker object required by mean--variance choice \citep{Markowitz1952}. It also separates decision-time information, which may determine portfolio weights, from horizon-closed information, which is used retrospectively to diagnose common innovations. Observational screening identifies an operative representation, not unique causal labels \citep{PetersJanzingScholkopf2017}.

The paper makes four contributions. First, it gives precise causal, distributional and second-moment definitions and states the assumptions under which the implications pass from the structural level to portfolio covariance. Second, it derives an exact conditional-covariance decomposition into diagonal residual risk, represented systematic response, omitted systematic response and residual cross-asset dependence. The last two terms are economically distinct: omitted response risk is positive semidefinite, whereas residual screening error has no fixed sign. Third, it derives the constrained Markowitz solution on the admissible portfolio subspace together with exact finite perturbation identities and non-asymptotic bounds for weights and frontier potential. Fourth, it turns the theory into an auditable diagnostic and covariance-estimation procedure with frozen out-of-sample evaluation. This contribution is distinct from a factor-selection score: it identifies what a representation must remove, what remains when it fails and how that failure reaches the portfolio decision.

The evidence evaluates those claims rather than proposing a new performance ranking. Controlled experiments verify the information-timing distinction, the two representation-error channels, the limits of observational equivalence and the failure of an affine risk representation under state-varying loadings. Frozen market tests then show that compact same-calendar representations reduce residual dependence materially in a 150-stock U.S. equity panel and in 17 BarclayHedge hedge-fund strategy indices. In the equity panel, residual regularization improves materially on the uncorrected diagonal-residual estimator and remains competitive with standard covariance benchmarks. The empirical claims are therefore confined to second-moment representation and its portfolio consequences.

Section~\ref{sec:literature} positions the framework within conditional portfolio choice, covariance estimation and causal screening. Section~\ref{sec:theory} develops the theory. Section~\ref{sec:numerical} combines controlled experiments and market-data evidence, and Section~\ref{sec:conclusion} concludes.

\section{Related literature}\label{sec:literature}

The literature can be read as a sequence of increasingly explicit choices about information and risk. In dynamic asset-pricing models, conditioning variables determine investment opportunities, conditional moments and pricing restrictions \citep{Merton1973,HansenRichard1987,HansenJagannathan1991}. Portfolio research then brought conditioning information directly into portfolio construction. \citet{FersonSiegel2001} derive minimum-variance efficient portfolios that exploit conditioning information, \citet{FersonSiegelXu2006} characterize mimicking portfolios built from predetermined state variables, \citet{Chiang2015} studies benchmark-relative portfolio choice when managers possess conditioning information, and \citet{PenarandaSentana2016} analyze mean--variance frontiers with conditioning information. This literature establishes why conditioning matters for portfolio choice, but the conditioning representation is typically supplied as part of the model or as a set of predictor variables.

As the number of candidate variables grows, the problem becomes one of statistical information selection. \citet{Pereira2021} studies high-dimensional conditioning information using variable selection, shrinkage, principal-component regression and partial least squares evaluated by out-of-sample portfolio criteria. More broadly, the sensitivity of mean--variance portfolios to estimated moments is well known \citep{Michaud1989,BestGrauer1991,KanZhou2007}, and parameter or model uncertainty has motivated robust and multi-prior portfolio methods \citep{GarlappiUppalWang2007}. These approaches ask which variables or models improve prediction, estimation or portfolio performance. The criterion developed here is different: the selected information is assessed by the cross-asset dependence that remains after conditioning. Predictive relevance and screening are therefore related but distinct properties.

A parallel literature addresses the risk-estimation problem once a representation has been specified. Classical and high-dimensional factor models reduce covariance dimension \citep{ChamberlainRothschild1983,FanFanLv2008}, while characteristic-based latent-factor models provide another conditional representation of the cross-section \citep{KellyPruittSu2019}. Covariance shrinkage stabilizes estimation without requiring an explicit economic factor representation \citep{LedoitWolf2004,LedoitWolf2022}; approximate-factor estimators such as POET allow sparse residual dependence after extracting latent components \citep{FanLiaoMincheva2013}. Portfolio constraints can themselves regularize weights \citep{JagannathanMa2003}, and improvements in covariance estimation need not translate mechanically into portfolio gains \citep{DeMiguelGarlappiUppal2009}. Recent work extends this line through nonlinear covariance filtering \citep{BongiornoChallet2024}, portfolio-specific evaluation of covariance estimators \citep{DomHowardJansenLohre2025}, mixed-frequency factor information \citep{LuWang2025}, and dynamic hierarchical covariance estimation \citep{WangPapailiasVentre2026}. These methods regularize or estimate a covariance object conditional on a chosen representation. Screening addresses an earlier question: what observable information should be conditioned on, and what covariance structure follows when that information removes common dependence?

This distinction also clarifies the relation to factor models. Second-moment separation together with affine response to observed innovations produces a diagonal-plus-low-rank covariance of familiar factor form. Low rank, however, is not implied by screening itself; it follows from the additional finite-dimensional response representation. The observed innovation coordinates also need not coincide with principal components. The matched-dimension principal-component comparisons in the numerical analysis therefore serve as a natural statistical benchmark rather than as a competing definition of the information set. In this sense, screening complements rather than replaces factor modelling or shrinkage: it specifies an information-conditioned decomposition on which those estimation methods can subsequently operate.

The final connection is to the probabilistic and causal literature on screening-off. The common-cause idea originates with \citet{Reichenbach1956} and is formalized through conditional-independence structures in graphical models \citep{Pearl1988,Pearl2009}. Modern causal inference distinguishes observational conditional independence from invariance under intervention \citep{PetersBuhlmannMeinshausen2016,PetersJanzingScholkopf2017}, while general conditional-independence testing is known to be statistically difficult \citep{ShahPeters2020}. Earlier work introduces common-driver sensitivity portfolios and a broader causal-portfolio framework \citep{RD2023MLWA,RD2025Causal}. The contribution here is not the general notion of screening-off, which has a much longer history, but its portfolio-specific combination with a common-cause structural representation and the conditional-risk consequences derived from that separation. We call this construction \emph{causal separation for portfolio choice}. Screening-off defines the probabilistic separation criterion; the common-cause model supplies the causal interpretation. Observationally equivalent separators can therefore generate the same conditional portfolio objects without identifying the same causal variables.

\section{Theoretical developments}\label{sec:theory}

\subsection{Portfolio information and separation}\label{sec:screening}

\subsubsection{Decision information and horizon-closed information}

Let $\mathcal A=\{1,\ldots,n\}$ be the asset universe and let $X=(X^1,\ldots,X^K)$ be a finite declared universe of observable candidate drivers. Fix a decision date $t$, lookback $\delta>0$, and horizon $h>0$. Let $r_{t+h}=(r^1_{t+h},\ldots,r^n_{t+h})^{\top}$ denote the vector of asset returns over the horizon $(t,t+h]$. If $R^i$ denotes the observed history of asset $i$, define the decision-time asset information by
\begin{equation}
\F^{\mathcal A}_t
:=
\bigvee_{i=1}^n\sigma(R^i_s:s\in[t-\delta,t]).
\label{eq:asset-info}
\end{equation}
Thus $\F^{\mathcal A}_t$ contains coordinatewise asset histories, rather than an arbitrary sigma-algebra of market aggregates. For $\mathcal D\subseteq\{1,\ldots,K\}$ define
\begin{equation}
\F^{\mathcal D}_{t,0}:=\sigma(X^j_s:j\in\mathcal D,\ s\in[t-\delta,t]),
\qquad
\F^{\mathcal D}_{t,h}:=\sigma(X^j_s:j\in\mathcal D,\ s\in[t-\delta,t+h]).
\label{eq:driver-info}
\end{equation}
The two nested information sets are
\begin{equation}
\G_t(\mathcal D):=\F^{\mathcal A}_t\vee\F^{\mathcal D}_{t,0},
\qquad
\Hh_{t,h}(\mathcal D):=\F^{\mathcal A}_t\vee\F^{\mathcal D}_{t,h}.
\label{eq:two-information-sets}
\end{equation}
Portfolio weights $w_t$ are $\G_t(\mathcal D)$-measurable. The larger $\Hh_{t,h}(\mathcal D)$ is a conditioning object used to express screening after the common horizon innovations have realized; it is not information available when the portfolio is chosen.

\begin{definition}[Exact distributional separator]\label{def:separator}
A candidate set $\mathcal D$ is an \emph{exact distributional separator} at $(t,h)$ if
\begin{equation}
\LL(r_{t+h}\mid\Hh_{t,h}(\mathcal D))
=
\bigotimes_{i=1}^{n}\LL(r^i_{t+h}\mid\Hh_{t,h}(\mathcal D))
\quad\text{a.s.}
\label{eq:portfolio-separation}
\end{equation}
\end{definition}

\begin{definition}[Second-moment portfolio separator]\label{def:second-moment-separator}
Assume $r_{t+h}\in L^2$. A candidate set $\mathcal D$ is a \emph{second-moment portfolio separator} at $(t,h)$ if
\begin{equation}
\E\!\left[
\Cov(r_{t+h}\mid\Hh_{t,h}(\mathcal D))
\mid\G_t(\mathcal D)
\right]
\quad\text{is diagonal a.s.}
\label{eq:second-moment-separation}
\end{equation}
Equivalently, every off-diagonal horizon-closed residual covariance has zero conditional average at the decision date.
\end{definition}

Both definitions are pointwise in the decision date and horizon. A representation is \emph{$H$-uniformly separating} at $t$ when the relevant property holds for every $h\in(0,H]$. The theoretical statements below are for a fixed $(t,h)$; an estimate at one sampling horizon does not establish the uniform property.

Exact distributional separation requires mutual, not merely pairwise, conditional independence. Under square integrability it implies second-moment separation, while the converse is false. The second-moment definition is deliberately decision-oriented: cancellation across horizon-closed states can make the averaged residual covariance diagonal even when conditional covariance is not diagonal state by state. This distinction aligns the operational object with the portfolio problem: Markowitz choice uses Definition~\ref{def:second-moment-separator}; the stronger Definition~\ref{def:separator} supports distributional and causal interpretations.

The two information windows cannot generally be collapsed. If
\begin{equation}
r^i_{t+h}=b_i\eta_{t,h}+\varepsilon^i_{t,h},
\label{eq:window-example}
\end{equation}
with a common innovation $\eta_{t,h}$ realized after $t$, then conditioning only on decision information leaves the cross-sectional covariance generated by $\eta_{t,h}$. Historical horizon-closed regression can condition on realized $(\eta_{s,h},r_{s+h})$ pairs for $s<t$ without making $w_t$ depend on future information.

Screening and predictive sufficiency are different requirements. Relative to the declared full universe, a selected set can be called sufficient if replacing the full driver information by the selected information leaves the relevant decision-time and horizon-closed conditional laws unchanged. Because the declared universe is finite, inclusion-minimal sufficient sets exist whenever the sufficient class is nonempty, but named-variable representations need not be unique. Two sets that generate the same decision-time and horizon-closed sigma-algebras, up to completion, are observationally equivalent for all conditional-law functionals used below, up to almost-sure versions.

\subsubsection{Causal interpretation under additional structure}

The mean--variance theory requires Definition~\ref{def:second-moment-separator}, not a causal graph. A causal interpretation is available under stronger assumptions and only in one direction.

\begin{assumption}[Common-cause structural partition]\label{ass:causal}
At the fixed horizon, the variables admit an acyclic modular structural causal model with mutually independent exogenous blocks $U^0,U^1,\ldots,U^n$ such that: (i) a common state path $C_{t,h}=c_h(U^0)$ is represented by a declared set $\mathcal D_C$ on both the decision and horizon-closed windows; (ii) each asset path satisfies $R^i_{[t-\delta,t+h]}=f_{i,h}(C_{t,h},U^i)$; and (iii) all common causal channels among the assets over the horizon are contained in $C_{t,h}$. In particular, the observed pre-$t$ history of asset $i$ is measurable with respect to $\sigma(C_{t,h},U^i)$.
\end{assumption}

\begin{definition}[Causal separator for portfolio choice]\label{def:causal-separator}
Under Assumption~\ref{ass:causal}, the declared representation $\mathcal D_C$ that generates the common-state information is called a \emph{causal separator} for portfolio choice. We refer to the associated construction as \emph{causal separation}.
\end{definition}

\begin{proposition}[Structural sufficiency and the separation hierarchy]\label{prop:causal-soundness}
Under Assumption~\ref{ass:causal}, the representation $\mathcal D_C$ generating exactly the common-state information is an exact distributional separator at $(t,h)$. If the horizon returns are square-integrable, it is also a second-moment portfolio separator. Hence, under the stated structural assumptions,
\begin{equation}
\text{causal separator}
\Longrightarrow
\text{exact distributional separator}
\Longrightarrow
\text{second-moment portfolio separator}.
\label{eq:separator-hierarchy}
\end{equation}
A strictly richer conditioning set requires its own product-law or collider-safety check.
\end{proposition}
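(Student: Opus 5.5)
The plan is to reduce the distributional claim to a standard conditional-independence fact: independent variables stay independent after conditioning on a common $\sigma$-algebra they do not depend on. By Assumption~\ref{ass:causal}(ii), each asset path is $R^i_{[t-\delta,t+h]}=f_{i,h}(C_{t,h},U^i)$. In particular the horizon return $r^i_{t+h}$ is $\sigma(C_{t,h},U^i)$-measurable. By (i), $\sigma(C_{t,h})=\F^{\mathcal D_C}_{t,h}$, and $C_{t,h}=c_h(U^0)$ is $\sigma(U^0)$-measurable. Together with the last sentence of the assumption, this gives the identification
\[
\Hh_{t,h}(\mathcal D_C)=\sigma(C_{t,h})\vee\bigvee_{i=1}^n\sigma(Y^i),
\]
where $Y^i:=(R^i_s)_{s\in[t-\delta,t]}$ is itself a measurable function of $(C_{t,h},U^i)$.

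The core step uses the mutual independence of $U^0,U^1,\ldots,U^n$. Conditional on $C_{t,h}$, which depends only on $U^0$, the pairs $(Y^i,r^i_{t+h})$ for $i=1,\ldots,n$ are functions of $(C_{t,h},U^i)$ with independent noise blocks. I will verify this with a disintegration argument. For bounded measurable $g_i$ and a generating event of $\Hh_{t,h}(\mathcal D_C)$ of the form $\{C_{t,h}\in A\}\cap\bigcap_i\{Y^i\in B_i\}$, I compute
\[
\E\Big[\prod_i g_i(r^i_{t+h})\one_{B_i}(Y^i)\one_A(C_{t,h})\Big]
\]
by first integrating out $(U^1,\ldots,U^n)$ with $U^0$ fixed. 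Independence makes the inner integral factor as $\prod_i\phi_i(C_{t,h})$, where $\phi_i(c)=\E[g_i(f_{i,h}(c,U^i)\text{-return})\one_{B_i}(\cdot)]$. This shows that, given $\sigma(C_{t,h})$, the $n$ blocks $\sigma(Y^i,r^i_{t+h})$ are conditionally mutually independent. A standard lemma then finishes the step: if blocks $(Y^i,Z^i)$ are conditionally independent given $\mathcal C$, then the $Z^i$ remain conditionally independent given $\mathcal C\vee\sigma(Y^1,\ldots,Y^n)$. Moreover, $\LL(Z^i\mid\mathcal C\vee\sigma(Y))=\LL(Z^i\mid\mathcal C\vee\sigma(Y^i))$. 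A $\pi$--$\lambda$ argument extends the factorization from products of indicators to all of $\Hh_{t,h}(\mathcal D_C)$, which yields \eqref{eq:portfolio-separation}.

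For the second implication, square integrability lets me take conditional covariances. Mutual conditional independence gives, for $i\neq j$,
\[
\E[r^ir^j\mid\Hh]=\E[r^i\mid\Hh]\,\E[r^j\mid\Hh],
\]
where I apply conditional Cauchy--Schwarz, or truncate and pass to the limit, to handle integrability of the product. Hence $\Cov(r_{t+h}\mid\Hh_{t,h}(\mathcal D_C))$ is diagonal almost surely. Its conditional expectation given $\G_t(\mathcal D_C)\subseteq\Hh_{t,h}(\mathcal D_C)$ is then diagonal as well, since off-diagonal entries that vanish almost surely have zero conditional mean. This is exactly Definition~\ref{def:second-moment-separator}.

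The closing remark about richer sets needs no proof beyond a pointer. Adding a variable that depends on several $U^i$, such as a collider like a market aggregate, breaks the block structure used in the key step. So the argument does not transfer.

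The main obstacle is the identification of $\Hh_{t,h}(\mathcal D_C)$ with the $\sigma$-algebra generated by $C_{t,h}$ and the per-asset histories. The assumption only states that pre-$t$ histories are $\sigma(C_{t,h},U^i)$-measurable and that $\mathcal D_C$ generates $\sigma(C_{t,h})$ on both windows. I will need to read clause (i) carefully so that $\F^{\mathcal D_C}_{t,h}$ equals $\sigma(C_{t,h})$ exactly, rather than merely containing it. Otherwise, extra driver variables could act as colliders and the block factorization would fail.
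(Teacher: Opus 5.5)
Your proposal is correct and follows essentially the same route as the paper. Both arguments use the same three steps: the blocks $\sigma(C_{t,h},U^i)$ are conditionally independent given $\sigma(C_{t,h})$; conditioning coordinatewise on the pre-$t$ histories then yields the product law given $\F^{\mathcal A}_t\vee\sigma(C_{t,h})=\Hh_{t,h}(\mathcal D_C)$; and diagonality of the horizon-closed covariance passes to its $\G_t$-conditional expectation. Your disintegration and block-conditioning lemma make explicit what the paper only sketches, and the identification you flag as the main obstacle is supplied by the hypothesis that $\mathcal D_C$ generates exactly the common-state information, which the paper uses in the same way.
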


The proof is given in Appendix~\ref{app:proofs}. Neither converse is claimed. Assumption~\ref{ass:causal} is a transparent sufficient condition, not an identification theorem: it excludes direct asset-to-asset causal channels and within-horizon feedback unless those mechanisms are represented in the common state. The restriction to information generated by the common state is also substantive because conditioning on additional variables can open dependence paths even when the common-cause state itself screens the assets.

\begin{proposition}[Observational equivalence and intervention]\label{prop:obs-intervention}
Suppose an exact distributional separator $C$ and an observable representation $S$ generate the same decision-time and horizon-closed information, up to completion. Then they generate identical conditional laws, up to almost-sure versions, and hence identical portfolio objects. Observational screening therefore identifies operative information, not unique causal labels.

Under Assumption~\ref{ass:causal}, a modular intervention on a variable outside the ancestral system of the common state and horizon returns leaves the common-state conditional laws invariant, provided that the intervention does not alter the conditioning information itself.
\end{proposition}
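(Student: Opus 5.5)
The first part is measure-theoretic and I plan to handle it by working directly with the conditioning sigma-algebras. The hypothesis gives, for the completed sigma-algebras, $\overline{\G_t(C)}=\overline{\G_t(S)}$ and $\overline{\Hh_{t,h}(C)}=\overline{\Hh_{t,h}(S)}$. The key lemma is that conditional expectations with respect to a sigma-algebra and its completion agree almost surely. This holds because null sets do not affect the defining integrals $\E[\,\cdot\,\one_A]$. Applying the lemma to indicators $\one\{r_{t+h}\in B\}$ with $B$ ranging over a countable $\pi$-system generating the Borel sets, I obtain regular conditional distributions that coincide a.s. The monotone class theorem then extends the agreement to all Borel $B$. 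In particular the product-law property \eqref{eq:portfolio-separation} transfers from $C$ to $S$, because the marginal laws $\LL(r^i_{t+h}\mid\Hh_{t,h})$ also agree a.s. The same argument, run with integrands $r^i r^j$ and $r^i$, gives equality of $\Cov(r_{t+h}\mid\Hh_{t,h})$ and of its $\G_t$-conditional average. The decision-time conditional mean and covariance agree as well. Every portfolio object defined later is a measurable functional of these conditional moments, so each is identical up to a.s. versions. The statement about causal labels is then just the observation that the map $\mathcal D\mapsto$ (generated sigma-algebras) is not injective. To make this concrete, I would note a trivial example: $S=\phi(C)$ with $\phi$ a bimeasurable bijection that relabels the variables.

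The second part uses the structural model of Assumption~\ref{ass:causal}. Let $V$ be the intervened variable and let $\mathcal M$ be the ancestral set of $\{C_{t,h}, r_{t+h}\}$ together with the variables generating $\G_t(\mathcal D_C)$ and $\Hh_{t,h}(\mathcal D_C)$. The clause ``does not alter the conditioning information'' is what lets me include the latter. By modularity, $do(V=v)$ replaces only the structural equation of $V$. Every structural equation for a variable in $\mathcal M$ has all its arguments in $\mathcal M$ or among the exogenous blocks. Since $V\notin\mathcal M$, these equations are unchanged, and the exogenous blocks $U^0,\ldots,U^n$ keep their joint law. By induction along a topological order of the acyclic graph restricted to $\mathcal M$, the joint law of all variables in $\mathcal M$ is the same under the pre- and post-intervention models. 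Conditional laws of $r_{t+h}$ given $\Hh_{t,h}(\mathcal D_C)$ or $\G_t(\mathcal D_C)$ are functionals of this joint law, so they are invariant up to a.s. versions. Proposition~\ref{prop:causal-soundness} then ensures the separator property itself persists under the intervention.

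The main obstacle is the precise handling of ``outside the ancestral system''. The inductive step needs the conditioning variables themselves to be non-descendants of $V$. If $V$ were a parent of a driver used in $\mathcal D_C$, the conditional law could change even though $C_{t,h}$ and $r_{t+h}$ were untouched. I would therefore state explicitly that the proviso is being read as requiring $\mathcal M$ to include the ancestors of the conditioning variables, and verify that under this reading the topological induction goes through. A smaller technical point is the existence of regular conditional distributions, which requires returns in a standard Borel space; this is automatic for $\R^n$.
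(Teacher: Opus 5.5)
Your proposal is correct and follows the paper's proof. Equality of the completed conditioning sigma-algebras gives a.s.\ equal conditional expectations, hence equal conditional laws and portfolio objects; and a modular intervention outside the relevant ancestral subsystem leaves its structural assignments, and therefore its joint law and the conditional law of $r_{t+h}$, unchanged. Your extra measure-theoretic steps (the completion lemma and the countable $\pi$-system argument for regular conditional distributions) and your explicit reading of the proviso as excluding ancestors of the conditioning variables make rigorous what the paper states more briefly.
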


An invertible recording $P=\psi(C)$ is therefore indistinguishable from $C$ observationally, even if it is only a measurement of the underlying state. If the measurement mechanism is later severed from $C$, the proxy can lose screening while the common state remains valid. This is the distinction tested in the controlled intervention experiment.

\subsection{Conditional risk under screening-off}\label{sec:risk}

Fix a representation and horizon and write $\G_t:=\G_t(\mathcal D)$ and $\Hh_{t,h}:=\Hh_{t,h}(\mathcal D)$. Assume $r_{t+h}$ is square-integrable. Define
\begin{equation}
\mu_{t,h}:=\E[r_{t+h}\mid\G_t],
\qquad
m_{t,h}:=\E[r_{t+h}\mid\Hh_{t,h}].
\label{eq:means}
\end{equation}

\subsubsection{Exact decomposition}

The conditional law of total covariance gives, without a separation assumption,
\begin{equation}
Q_{t,h}:=\Cov(r_{t+h}\mid\G_t)
=
V_{t,h}+\Cov(m_{t,h}\mid\G_t),
\qquad
V_{t,h}:=\E[\Cov(r_{t+h}\mid\Hh_{t,h})\mid\G_t].
\label{eq:general-total-cov}
\end{equation}
Split the first term into its diagonal and off-diagonal parts,
\begin{equation}
D_{t,h}:=\diag(V_{t,h}),
\qquad
E_{t,h}:=V_{t,h}-D_{t,h}.
\label{eq:DE}
\end{equation}
$D_{t,h}$ is diagonal positive semidefinite. The matrix $E_{t,h}$ is symmetric with zero diagonal. If nonzero, it is indefinite: its trace is zero, so it cannot be positive or negative semidefinite unless it vanishes.

\subsubsection{Finite-dimensional innovation representation}
\label{sec:innovation-representation}

Choose a square-integrable horizon innovation vector $\xi_{t,h}\in\R^k$, measurable with respect to $\Hh_{t,h}$, and let
\begin{equation}
\widetilde\xi_{t,h}:=\xi_{t,h}-\E[\xi_{t,h}\mid\G_t],
\qquad
\Lambda_{t,h}:=\Cov(\xi_{t,h}\mid\G_t)\succ0.
\label{eq:innovation}
\end{equation}

Define the conditional linear projection
\begin{equation}
L_{t,h}:=\Cov(m_{t,h},\xi_{t,h}\mid\G_t)\Lambda_{t,h}^{-1},
\qquad
u_{t,h}:=m_{t,h}-\mu_{t,h}-L_{t,h}\widetilde\xi_{t,h}.
\label{eq:loading}
\end{equation}

\begin{theorem}[Exact representation decomposition]\label{thm:risk-decomp}
Under the preceding innovation assumptions, for any candidate representation and square-integrable horizon return,
\begin{equation}
\boxed{
Q_{t,h}
=D_{t,h}+L_{t,h}\Lambda_{t,h}L_{t,h}^{\top}
+\Omega_{t,h}+E_{t,h},
}
\qquad
\Omega_{t,h}:=\Cov(u_{t,h}\mid\G_t)\succeq0.
\label{eq:master}
\end{equation}
Moreover,
\begin{equation}
\Cov(u_{t,h},\widetilde\xi_{t,h}\mid\G_t)=0,
\qquad
\rank(L_{t,h}\Lambda_{t,h}L_{t,h}^{\top})\le k.
\label{eq:projection-properties}
\end{equation}
\end{theorem}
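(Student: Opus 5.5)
The plan is to start from the conditional law of total covariance \eqref{eq:general-total-cov}, which holds without any separation assumption, together with the split $V_{t,h}=D_{t,h}+E_{t,h}$ from \eqref{eq:DE}. This already gives
\[
Q_{t,h}=D_{t,h}+E_{t,h}+\Cov(m_{t,h}\mid\G_t).
\]
So the whole content of the theorem is a decomposition of the conditional covariance of the horizon-closed mean $m_{t,h}$ into a projected part and an orthogonal remainder. The tower property is available because $\G_t\subseteq\Hh_{t,h}$ by \eqref{eq:two-information-sets}. It gives $\E[m_{t,h}\mid\G_t]=\mu_{t,h}$, so that $m_{t,h}-\mu_{t,h}$ is the $\G_t$-centered version of $m_{t,h}$. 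By construction, $\widetilde\xi_{t,h}$ is the $\G_t$-centered innovation. Both vectors are square-integrable: $m_{t,h}$ because conditional expectation is an $L^2$ contraction, and $\xi_{t,h}$ by assumption. Hence all the conditional second moments below are well defined.

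Next I write $m_{t,h}-\mu_{t,h}=L_{t,h}\widetilde\xi_{t,h}+u_{t,h}$ and prove the orthogonality claim in \eqref{eq:projection-properties}. The key point is that $L_{t,h}$ and $\Lambda_{t,h}^{-1}$ are $\G_t$-measurable, so they can be pulled out of $\G_t$-conditional expectations. This gives
\[
\Cov(u_{t,h},\widetilde\xi_{t,h}\mid\G_t)=\Cov(m_{t,h},\xi_{t,h}\mid\G_t)-L_{t,h}\Lambda_{t,h}=0
\]
by the definition \eqref{eq:loading}. I would also note that $\E[u_{t,h}\mid\G_t]=0$, so these covariances are plain conditional cross-moments.

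Expanding $\Cov(m_{t,h}\mid\G_t)=\Cov(L_{t,h}\widetilde\xi_{t,h}+u_{t,h}\mid\G_t)$, the two cross terms vanish by the orthogonality just shown. What remains is $L_{t,h}\Lambda_{t,h}L_{t,h}^\top+\Omega_{t,h}$, which yields \eqref{eq:master}. The matrix $\Omega_{t,h}$ is positive semidefinite because it is a conditional covariance matrix: for a fixed $a$, $a^\top\Omega_{t,h}a=\E[(a^\top u_{t,h})^2\mid\G_t]\ge0$ a.s. To make this a.s. statement simultaneous over all $a$, I would take $a$ in a countable dense set and use continuity. The rank bound follows because $L_{t,h}\Lambda_{t,h}L_{t,h}^\top$ factors through $\R^k$, so its rank is at most $\rank(\Lambda_{t,h})\le k$.

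I expect the main obstacle to be technical rather than algebraic. It is justifying the "pull-out" steps for the random matrices $L_{t,h}$ and $\Lambda_{t,h}^{-1}$ when these are only in $L^0$ and not bounded, because then products such as $L_{t,h}\widetilde\xi_{t,h}$ need not be integrable a priori. My first approach would be localization. On the $\G_t$-measurable events $A_N=\{\|L_{t,h}\|+\|\Lambda_{t,h}^{-1}\|\le N\}$, the products are square-integrable and the generalized conditional expectation identities hold. Since $A_N\uparrow\Omega$ a.s., where $\Omega$ here denotes the sample space (the matrix $\Omega_{t,h}$ carries subscripts), each identity then holds a.s. If that becomes cumbersome, I would work with regular conditional distributions given $\G_t$. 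Then every identity reduces to the corresponding finite-dimensional identity for ordinary covariances under a fixed probability measure, where the decomposition is elementary linear projection. This presumes a regular version exists, which holds for $\R^n\times\R^k$-valued vectors.
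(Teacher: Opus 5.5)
Your proposal is correct and follows the paper's proof: you apply the conditional law of total covariance with $V_{t,h}=D_{t,h}+E_{t,h}$, then use the conditional $L^2$ projection $m_{t,h}-\mu_{t,h}=L_{t,h}\widetilde\xi_{t,h}+u_{t,h}$, whose orthogonality removes the cross terms, and you get the rank bound by factoring through $\R^k$. Your localization (or regular-conditional-distribution) argument for pulling out the $\G_t$-measurable but possibly unbounded $L_{t,h}$ and $\Lambda_{t,h}^{-1}$ adds technical detail that the paper's terse proof leaves implicit.
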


The four terms have distinct meanings. $D$ is the decision-time average of horizon-closed residual variances. $L\Lambda L^{\top}$ is systematic response represented by the chosen innovation span. $\Omega$ is systematic response orthogonal to that span. $E$ is residual cross-asset covariance left by the candidate information. The identity is exact even when the candidate fails to separate the assets and the response is nonlinear.

\begin{corollary}[Restrictions associated with separation and affine response]\label{cor:separation-restrictions}
Under Theorem~\ref{thm:risk-decomp}:
\begin{enumerate}
\item $\mathcal D$ is a second-moment portfolio separator if and only if $E_{t,h}=0$;
\item exact distributional separation implies $E_{t,h}=0$;
\item $\Omega_{t,h}=0$ if and only if $m_{t,h}-\mu_{t,h}$ belongs almost surely to the conditional linear span of $\widetilde\xi_{t,h}$.
\end{enumerate}
Consequently, second-moment separation and affine response jointly imply
\begin{equation}
Q_{t,h}=D_{t,h}+L_{t,h}\Lambda_{t,h}L_{t,h}^{\top},
\qquad
\rank(Q_{t,h}-D_{t,h})\le k.
\label{eq:dlr}
\end{equation}
\end{corollary}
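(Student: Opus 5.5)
The plan is to read off each item from the exact identity in Theorem~\ref{thm:risk-decomp} and the definitions in \eqref{eq:general-total-cov}--\eqref{eq:DE}; no new probabilistic machinery should be needed.

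For item 1, the argument is definitional. Definition~\ref{def:second-moment-separator} requires $V_{t,h}=\E[\Cov(r_{t+h}\mid\Hh_{t,h})\mid\G_t]$ to be diagonal almost surely. By construction $E_{t,h}=V_{t,h}-\diag(V_{t,h})$, so it collects exactly the off-diagonal entries of $V_{t,h}$. Hence $V_{t,h}$ is diagonal if and only if $E_{t,h}=0$.

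For item 2, I would use the product-law condition \eqref{eq:portfolio-separation}. Under square integrability it gives $\Cov(r^i_{t+h},r^j_{t+h}\mid\Hh_{t,h})=0$ a.s.\ for $i\neq j$, since conditional independence factorizes the conditional expectation of the product. Taking $\E[\cdot\mid\G_t]$, which is valid because $\G_t\subseteq\Hh_{t,h}$ and the conditional covariance is integrable, yields a diagonal $V_{t,h}$. Item 1 then gives $E_{t,h}=0$; this is the second implication of Proposition~\ref{prop:causal-soundness}, and I would cite it directly.

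Item 3 is the only step needing some care.
\begin{itemize}
\item \textbf{Forward direction.} Suppose $\Omega_{t,h}=\Cov(u_{t,h}\mid\G_t)=0$. Then $u_{t,h}=\E[u_{t,h}\mid\G_t]$ a.s. But $\E[u_{t,h}\mid\G_t]=\E[m_{t,h}\mid\G_t]-\mu_{t,h}-L_{t,h}\E[\widetilde\xi_{t,h}\mid\G_t]$. By the tower property and $\G_t$-measurability of $L_{t,h}$, this equals $0$. Hence $m_{t,h}-\mu_{t,h}=L_{t,h}\widetilde\xi_{t,h}$, with coefficients $\G_t$-measurable, which is membership in the conditional linear span.
\item \textbf{Converse.} Suppose $m_{t,h}-\mu_{t,h}=A\widetilde\xi_{t,h}$ for some $\G_t$-measurable $A$. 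Then $\Cov(m_{t,h},\xi_{t,h}\mid\G_t)=A\Lambda_{t,h}$. Since $\Lambda_{t,h}\succ0$, this gives $L_{t,h}=A$, so $u_{t,h}=0$ and $\Omega_{t,h}=0$.
\end{itemize}
The main subtlety is interpreting ``conditional linear span'' as combinations with $\G_t$-measurable coefficients. I would state that convention explicitly. I would also note that pulling $L_{t,h}$ out of conditional covariances requires $L_{t,h}$ to be $\G_t$-measurable, which holds by its definition in \eqref{eq:loading}.

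Finally, inserting $E_{t,h}=0$ and $\Omega_{t,h}=0$ into \eqref{eq:master} gives \eqref{eq:dlr}. The rank bound on $Q_{t,h}-D_{t,h}=L_{t,h}\Lambda_{t,h}L_{t,h}^{\top}$ follows from \eqref{eq:projection-properties}, since this is an $n\times n$ product through $\R^k$.
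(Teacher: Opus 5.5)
Your proposal is correct and follows the same route as the paper. Item 1 is read off from $E_{t,h}=V_{t,h}-\diag(V_{t,h})$, and item 2 from the almost-sure diagonality of $\Cov(r_{t+h}\mid\Hh_{t,h})$ under the product law; item 3 uses $\E[u_{t,h}\mid\G_t]=0$, so that $\Omega_{t,h}=\E[u_{t,h}u_{t,h}^{\top}\mid\G_t]$ vanishes if and only if $u_{t,h}=0$ a.s., and \eqref{eq:dlr} follows by substitution into \eqref{eq:master}. Your explicit converse in item 3, which identifies any $\G_t$-measurable coefficient with $L_{t,h}$ via $\Lambda_{t,h}\succ0$, fills in a step the paper leaves implicit.
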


Neither distributional nor second-moment separation implies low rank by itself. With scalar $Z\sim N(0,1)$ and independent idiosyncratic errors, the three conditional means $Z$, $Z^2-1$ and $Z^3-3Z$ are mutually uncorrelated and generate a rank-three systematic covariance even though the separator is one-dimensional. For a nonlinear response satisfying the conditional $L^2$ remainder condition, a small-innovation expansion gives a local low-rank approximation in conditional operator norm rather than an exact identity. Proposition~\ref{prop:local-response-app} in Appendix~\ref{app:proofs} states the precise local result.

\subsection{Decision-time sensitivity geometry}

The innovation loading $L_{t,h}$ describes uncertainty arriving over the horizon. Portfolio sensitivity restrictions use a different object: the response of the decision-time conditional mean to the state already observed at $t$. The decision state introduced below need not coincide with the horizon-innovation coordinates used in Section~\ref{sec:innovation-representation}. Suppose
\begin{equation}
\mu_{t,h}=g_t(Z_t),
\qquad Z_t\in\R^m,
\label{eq:state-map}
\end{equation}
with $g_t$ continuously differentiable locally. Define
\begin{equation}
S_t:=Dg_t(Z_t)\in\R^{n\times m}.
\label{eq:sensitivity}
\end{equation}
If $U_t\in\R^{m\times q}$ spans economically chosen state directions to be neutralized, then
\begin{equation}
w^{\top}S_tU_t=0
\quad\Longleftrightarrow\quad
\Gamma_t w=0,
\qquad \Gamma_t:=U_t^{\top}S_t^{\top}.
\label{eq:constraint}
\end{equation}
The choice of $U_t$ is an admissibility decision, not an output of screening. Under a local diffeomorphism $z'=\psi(z)$, transporting directions as $U'_t=D\psi(Z_t)U_t$ and sensitivities as $S'_t=S_tD\psi(Z_t)^{-1}$ gives $U_t'^{\top}S_t'^{\top}=U_t^{\top}S_t^{\top}=\Gamma_t$. Hence the portfolio restriction depends on the economic directions rather than on the chosen coordinates. Proposition~\ref{prop:geometry-invariance-app} in Appendix~\ref{app:proofs} records the invariance formally.

\subsection{Projected conditional mean--variance choice}\label{sec:portfolio}

Once $(\mu,Q)$ and the admissibility geometry have been specified, the optimization is the classical mean--variance problem \citep{Markowitz1952}. At a fixed decision node write $\Gamma:=\Gamma_t$. For risk tolerance $\gamma\ge0$, stack the budget and sensitivity restrictions as
\begin{equation}
A:=\begin{pmatrix}\one^{\top}\\ \Gamma\end{pmatrix},
\qquad
b:=(1,0,\ldots,0)^{\top},
\label{eq:A}
\end{equation}
with $A$ full row rank, and consider
\begin{equation}
\max_w\left\{\gamma w^{\top}\mu-\frac12w^{\top}Qw\right\}
\quad\text{subject to}\quad Aw=b,
\qquad Q\succ0.
\label{eq:mv}
\end{equation}

\begin{theorem}[Projected conditional Markowitz portfolio]\label{thm:markowitz}
The unique solution of \eqref{eq:mv} is
\begin{equation}
w^{\star}=w_0+\gamma M\mu,
\label{eq:wstar}
\end{equation}
where
\begin{align}
w_0&:=Q^{-1}A^{\top}(AQ^{-1}A^{\top})^{-1}b,\label{eq:w0}\\
M&:=Q^{-1}-Q^{-1}A^{\top}(AQ^{-1}A^{\top})^{-1}AQ^{-1}.\label{eq:M}
\end{align}
If the columns of $N$ form a basis of $\ker A$, then
\begin{equation}
M=N(N^{\top}QN)^{-1}N^{\top},
\label{eq:M-tangent}
\end{equation}
so $M$ is the inverse risk operator restricted to feasible directions.
\end{theorem}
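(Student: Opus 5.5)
The plan is to treat \eqref{eq:mv} as an equality-constrained strictly concave quadratic program and read the solution off its Lagrangian first-order conditions. Because $Q\succ0$, the objective $\gamma w^{\top}\mu-\tfrac12 w^{\top}Qw$ is strictly concave. The feasible set $\{w:Aw=b\}$ is a nonempty affine subspace, nonempty since $A$ has full row rank. Hence a maximizer exists and is unique, and it is characterized by the KKT system
\begin{equation*}
\gamma\mu-Qw-A^{\top}\lambda=0,\qquad Aw=b,
\end{equation*}
with no constraint qualification issue, since the constraints are linear. Solving the first equation gives $w=Q^{-1}(\gamma\mu-A^{\top}\lambda)$. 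Substituting into $Aw=b$ yields $AQ^{-1}A^{\top}\lambda=\gamma AQ^{-1}\mu-b$. The matrix $AQ^{-1}A^{\top}$ is positive definite, because $Q^{-1}\succ0$ and $A^{\top}$ has full column rank, so it is invertible. Solving for $\lambda$ and substituting back, then grouping the $b$-terms and the $\gamma\mu$-terms, gives exactly $w^\star=w_0+\gamma M\mu$ with $w_0$ and $M$ as in \eqref{eq:w0}--\eqref{eq:M}.

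For the tangent-space formula \eqref{eq:M-tangent}, I would argue by a change of variables rather than matrix algebra. Let $N\in\R^{n\times(n-p)}$ have columns forming a basis of $\ker A$. Fix the particular solution $w_0$; it satisfies $Aw_0=b$, which can be checked directly from \eqref{eq:w0}. Every feasible $w$ is then uniquely $w_0+Nz$. Substituting reduces the problem to the unconstrained maximization over $z$ of
\begin{equation*}
\gamma z^{\top}N^{\top}\mu-z^{\top}N^{\top}Qw_0-\tfrac12 z^{\top}N^{\top}QNz .
\end{equation*}
The key observation is that $N^{\top}Qw_0=N^{\top}A^{\top}(\cdot)=0$, because $AN=0$. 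So the optimum is $z^\star=\gamma(N^{\top}QN)^{-1}N^{\top}\mu$, where $N^{\top}QN\succ0$ because $N$ has full column rank. Thus $w^\star=w_0+\gamma N(N^{\top}QN)^{-1}N^{\top}\mu$. By uniqueness this equals $w_0+\gamma M\mu$ for every $\mu\in\R^n$ and every $\gamma>0$. Comparing the two expressions gives $M\mu=N(N^{\top}QN)^{-1}N^{\top}\mu$ for all $\mu$, hence the matrix identity.

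The main obstacle is only bookkeeping: verifying $Aw_0=b$ and $N^{\top}Qw_0=0$, and making the identification of matrices rigorous. The latter uses the fact that $\mu$ ranges over all of $\R^n$ with $\gamma$ fixed positive, since the solution formula is algebraic in $\mu$. As a sanity check, one can also verify directly that $MA^{\top}=0$ and $MQN=N$. From these, $MQ$ is the $Q$-orthogonal projector onto $\ker A$, which gives the interpretation of $M$ as the inverse risk operator restricted to feasible directions.
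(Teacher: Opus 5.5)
Your proposal is correct and follows the paper's proof. The first part is the same Lagrangian/KKT solve for $\lambda$ that yields $w^{\star}=w_0+\gamma M\mu$, and the tangent-space formula likewise comes from writing feasible points as a particular solution plus $N$ times a reduced coordinate and solving the reduced problem. Your version supplies details the paper only sketches, namely $N^{\top}Qw_0=0$ (so the reduced problem has no cross term) and the identification of $M$ by letting $\mu$ range over $\R^n$ at fixed $\gamma>0$.
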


Define
\begin{equation}
\mu_0:=w_0^{\top}\mu,
\qquad
\sigma_0^2:=w_0^{\top}Qw_0,
\qquad
\Delta:=\mu^{\top}M\mu\ge0.
\label{eq:frontier-defs}
\end{equation}
Then
\begin{equation}
\mu_p(\gamma)=\mu_0+\gamma\Delta,
\qquad
\sigma_p^2(\gamma)=\sigma_0^2+\gamma^2\Delta.
\label{eq:frontier}
\end{equation}
If $\Delta>0$, eliminating $\gamma$ gives
\begin{equation}
\sigma_p^2=\sigma_0^2+\frac{(\mu_p-\mu_0)^2}{\Delta}.
\label{eq:frontier-parabola}
\end{equation}
Moreover,
\begin{equation}
\sqrt{\Delta}
=
\sup_{v\in\ker A,\,v\ne0}
\frac{|v^{\top}\mu|}{\sqrt{v^{\top}Qv}},
\label{eq:IR}
\end{equation}
so $\sqrt{\Delta}$ is the maximum conditional information ratio of a zero-cost feasible perturbation. It is not, in general, the maximum Sharpe ratio among fully invested portfolios.

Under second-moment separation, $E=0$ and hence $Q\succeq D$. If, in addition,
\begin{equation}
D\succeq\underline\sigma^2 I,
\label{eq:floor}
\end{equation}
then $Q\succeq\underline\sigma^2I$ and $\|M\|_2\le\underline\sigma^{-2}$. Here and below, $\|\cdot\|_2$ denotes the Euclidean norm for vectors and the spectral norm for matrices. Consequently,
\begin{equation}
\|w^{\star}(\mu+\delta\mu)-w^{\star}(\mu)\|_2
\le
\frac{\gamma}{\underline\sigma^2}\|\delta\mu\|_2.
\label{eq:mean-bound}
\end{equation}
The positive residual variance floor is an additional quantitative assumption, not a consequence of screening. Proposition~\ref{prop:floor-stability-app} in Appendix~\ref{app:proofs} gives the corresponding operator argument.

When the affine covariance representation \eqref{eq:dlr} is used and the positive residual floor \eqref{eq:floor} holds, $D$ is invertible and Woodbury gives
\begin{equation}
Q^{-1}x
=D^{-1}x-D^{-1}L(\Lambda^{-1}+L^{\top}D^{-1}L)^{-1}L^{\top}D^{-1}x.
\label{eq:woodbury}
\end{equation}
For fixed driver and constraint dimensions this reduces the dominant computational cost to linear order in the number of assets; Proposition~\ref{prop:structured-solver-app} in Appendix~\ref{app:proofs} gives the corresponding arithmetic and memory bounds. If $\Omega+E\ne0$ and no additional structure is available, the formula is an approximation rather than the inverse of the true covariance.

\subsection{Misspecification and robustness}\label{sec:robustness}

Second-moment separation and affine response are distinct benchmarks.
Statistical implementations can fail because the chosen innovation span
misses systematic response variation, because the selected information leaves
residual cross-asset covariance, or both.

\subsubsection{Representation-error channels}

Relative to the structured benchmark, define
\begin{equation}
R:=\Omega+E,
\qquad
Q^0:=D+L\Lambda L^{\top},
\qquad
Q=Q^0+R.
\label{eq:R}
\end{equation}
A nonzero $\Omega\succeq0$ is compatible with exact distributional
screening; it says that the chosen innovation span is incomplete or the
response is not affine in those coordinates. A nonzero $E$ means that
second-moment portfolio separation fails. Conversely, $E=0$ establishes only
the second-moment restriction and does not imply mutual conditional
independence or causal identification. This is the theoretical reason to
estimate a residual correction rather than silently impose $R=0$.

\subsubsection{Exact finite covariance perturbations}

For this subsection, let $\bar Q\succ0$ denote any baseline covariance and
let $\bar Q_R:=\bar Q+R\succ0$, holding $(\mu,A,b,\gamma)$ fixed. Let
$(M,w^{\star},\Delta)$ be the projected objects under $\bar Q$ and
$(M_R,w_R^{\star},\Delta_R)$ their counterparts under $\bar Q_R$. This
generic perturbation notation is distinct from the decomposition
$Q=Q^0+R$ in \eqref{eq:R}.

\begin{theorem}[Finite covariance-perturbation identities]
\label{thm:perturbation}
Whenever both covariance matrices are positive definite,
\begin{align}
M_R-M
&=-M R M_R=-M_R R M,
\label{eq:M-resolvent}\\
w_R^{\star}-w^{\star}
&=-M_R R w^{\star}=-M R w_R^{\star},
\label{eq:w-pert}\\
\Delta_R-\Delta
&=-(M\mu)^{\top}R(M_R\mu).
\label{eq:delta-pert}
\end{align}
\end{theorem}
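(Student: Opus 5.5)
The plan is to use the tangent-space representation from Theorem~\ref{thm:markowitz}. Fix a basis $N$ of $\ker A$, which does not depend on the covariance. Then
\[
M=N(N^{\top}\bar QN)^{-1}N^{\top},
\qquad
M_R=N(N^{\top}\bar Q_RN)^{-1}N^{\top}.
\]
Write $G:=N^{\top}\bar QN$ and $G_R:=N^{\top}\bar Q_RN=G+N^{\top}RN$. Both are positive definite, because $\bar Q,\bar Q_R\succ0$ and $N$ has full column rank.

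\textbf{Step 1: the resolvent identity for $M$.} The identity $G_R^{-1}-G^{-1}=-G^{-1}(G_R-G)G_R^{-1}=-G^{-1}N^{\top}RNG_R^{-1}$ is the standard resolvent identity. Pre-multiplying by $N$ and post-multiplying by $N^{\top}$ gives $M_R-M=-MRM_R$. The symmetric version $-M_RRM$ follows either from the other ordering $G_R^{-1}-G^{-1}=-G_R^{-1}(G_R-G)G^{-1}$ or by transposing, since $M$, $M_R$ and $R$ are symmetric.

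\textbf{Step 2: the weight identity.} Write $w^{\star}=w_0+\gamma M\mu$. The key observation is that $w_0$ also changes with the covariance, so it must be handled first. I would use the identity $M\bar Q w_0=0$. This holds because $\bar Q w_0=A^{\top}\lambda$ lies in the row space of $A$, and $N^{\top}A^{\top}=0$. Likewise $M_R\bar Q_R w_{0,R}=0$. In addition, $w_{0,R}-w_0\in\ker A$, since both satisfy $Aw=b$.

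For $v\in\ker A$ we have $v=NG_R^{-1}N^{\top}\bar Q_R v=M_R\bar Q_R v$. Applying this to $v=w_{0,R}-w_0$ and using $M_R\bar Q_Rw_{0,R}=0$ gives
\[
w_{0,R}-w_0=-M_R\bar Q_R w_0=-M_R(\bar Q+R)w_0=-M_R\bar Qw_0-M_RRw_0 .
\]
Next I would show $M_R\bar Qw_0=0$. This holds because $\bar Qw_0=A^{\top}\lambda$ and $N^{\top}A^{\top}\lambda=0$. Hence $w_{0,R}-w_0=-M_RRw_0$.

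Combining this with Step 1,
\[
w_R^{\star}-w^{\star}=-M_RRw_0-\gamma M_RRM\mu=-M_RR\,w^{\star}.
\]
The companion form $-MRw_R^{\star}$ follows by exchanging the roles of $\bar Q$ and $\bar Q_R$ and of $R$ and $-R$, or by the symmetric argument with $M\bar Q$ acting on $\ker A$.

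\textbf{Step 3: the frontier potential.} Here $\Delta_R-\Delta=\mu^{\top}(M_R-M)\mu=-\mu^{\top}MRM_R\mu=-(M\mu)^{\top}R(M_R\mu)$, using the symmetry of $M$.

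The main obstacle is Step 2, specifically handling the change in the minimum-variance anchor $w_0$. The projector identity $M_R\bar Q_R v=v$ on $\ker A$ is the tool I would rely on, and the key cancellation is that $\bar Q w_0$ lies in $\operatorname{range}(A^{\top})$, which $M_R$ annihilates. Everything else is routine linear algebra.
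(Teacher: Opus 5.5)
Your proof is correct and follows essentially the paper's route: the tangent-space form $M=N(N^{\top}\bar QN)^{-1}N^{\top}$, the resolvent identity for the reduced inverses, and the fact that the weight difference lies in $\ker A$ while the stationarity terms lie in $\operatorname{range}(A^{\top})$, which $N^{\top}$ annihilates. The only difference is organizational. The paper subtracts the two full KKT systems to get $d=-M_RRw^{\star}$ in one step, whereas you split $w^{\star}=w_0+\gamma M\mu$, treat the anchor via $M_R\bar Q_R v=v$ on $\ker A$, and then add the resolvent term; this is the same argument carried out in two pieces.
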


These are finite identities, not first-order approximations. For
$\bar Q_\tau=\bar Q+\tau R$, with $\tau$ in a neighbourhood of zero for
which $\bar Q_\tau\succ0$, they imply
\begin{align}
w_\tau^{\star}-w^{\star}
&=-\tau M R w^{\star}+O(\tau^2),
\label{eq:w-local}\\
\Delta_\tau-\Delta
&=-\tau(M\mu)^{\top}R(M\mu)+O(\tau^2).
\label{eq:delta-local}
\end{align}
If the baseline is the structured covariance
$\bar Q=Q^0=D+L\Lambda L^{\top}$, the floor \eqref{eq:floor} holds, and
$\|R\|_2<\underline\sigma^2$, then
\begin{equation}
\|w_R^{\star}-w^{\star}\|_2
\le
\frac{\|R\|_2}
{\underline\sigma^2-\|R\|_2}
\|w^{\star}\|_2,
\label{eq:w-bound}
\end{equation}
and
\begin{equation}
|\Delta_R-\Delta|
\le
\frac{\|R\|_2}
{\underline\sigma^2(\underline\sigma^2-\|R\|_2)}
\|\mu\|_2^2.
\label{eq:delta-bound}
\end{equation}

The sign structure also matters. If
$Q_{\mathrm{true}}=Q^0+\Omega$ with $\Omega\succeq0$, and
$(M_0,\Delta_0)$ and $(M_{\mathrm{true}},\Delta_{\mathrm{true}})$ are
computed under $Q^0$ and $Q_{\mathrm{true}}$, respectively, then
\begin{equation}
M_{\mathrm{true}}\preceq M_0,
\qquad
\Delta_{\mathrm{true}}\le\Delta_0.
\label{eq:omega-order}
\end{equation}
Omitting systematic response covariance can therefore only overstate feasible
frontier potential. No analogous sign statement holds for the indefinite
residual-screening error $E$.

\subsubsection{Residual covariance diagnostic}

Mean--variance choice depends directly on second moments, so a statistical
implementation can use a weaker second-moment diagnostic than the full
product-law definition. Assume $V_{ii}>0$, put $s_i^2:=V_{ii}$ and define
\begin{equation}
\epsilon^{\mathrm{cov}}(\mathcal D)
:=
\max_{i<j}\frac{|E_{ij}|}{s_is_j}.
\label{eq:eps}
\end{equation}
By Definition~\ref{def:second-moment-separator},
$\epsilon^{\mathrm{cov}}=0$ if and only if the population second-moment
separation restriction holds. Its vanishing does not imply distributional
or causal separation.

\begin{proposition}[Diagnostic-to-operator bound]\label{prop:eps-bound}
If
$|E_{ij}|\le\epsilon^{\mathrm{cov}}s_is_j$
for all $i\ne j$, then
\begin{equation}
\|E\|_2
\le
\|E\|_F
\le
\epsilon^{\mathrm{cov}}
\sqrt{
\left(\sum_i s_i^2\right)^2-\sum_i s_i^4
}
\le
\epsilon^{\mathrm{cov}}\|s\|_2^2.
\label{eq:eps-bound}
\end{equation}
\end{proposition}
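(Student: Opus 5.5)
The argument is a chain of three elementary inequalities, and I would prove them in the order they appear in \eqref{eq:eps-bound}.

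\emph{First inequality.} $\|E\|_2\le\|E\|_F$ is standard. Since $E$ is symmetric, $\|E\|_2$ is the largest absolute eigenvalue, while $\|E\|_F^2$ is the sum of the squares of all eigenvalues. Nothing specific to the paper is needed.

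\emph{Middle inequality.} This step uses the structure of $E$ from \eqref{eq:DE}: $E$ has zero diagonal. Hence
\[
\|E\|_F^2=\sum_{i\ne j}E_{ij}^2 .
\]
Applying the hypothesis $|E_{ij}|\le\epsilon^{\mathrm{cov}}s_is_j$ entrywise gives
\[
\|E\|_F^2\le(\epsilon^{\mathrm{cov}})^2\sum_{i\ne j}s_i^2s_j^2 .
\]
The remaining step is the identity
\[
\sum_{i\ne j}s_i^2s_j^2=\Bigl(\sum_i s_i^2\Bigr)^2-\sum_i s_i^4 ,
\]
which follows from expanding the square of $\sum_i s_i^2$ and removing the diagonal terms $i=j$. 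Taking square roots, using $\epsilon^{\mathrm{cov}}\ge0$, gives the middle bound.

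\emph{Last inequality.} Drop the nonnegative term $\sum_i s_i^4$ inside the square root. Since $\sum_i s_i^2=\|s\|_2^2$, this gives
\[
\sqrt{\bigl(\textstyle\sum_i s_i^2\bigr)^2-\sum_i s_i^4}\le\|s\|_2^2 .
\]

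\emph{Main point of care.} There is no real obstacle. The one thing to get right is that the zero diagonal of $E$ is what makes the Frobenius sum run only over $i\ne j$. Without it, the $\sum_i s_i^4$ correction would not appear, and the hypothesis gives no control over the diagonal entries anyway. I would also note that the statement assumes the bound only for $i\ne j$, consistent with the definition \eqref{eq:eps}. The assumption $V_{ii}>0$ is needed only so that $\epsilon^{\mathrm{cov}}$ is well defined; the inequality itself holds for any $s_i\ge0$.
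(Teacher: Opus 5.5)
Your proof is correct and is essentially the paper's argument: use $E_{ii}=0$ to write $\|E\|_F^2=\sum_{i\ne j}E_{ij}^2$, bound each term by the hypothesis, apply the identity $\sum_{i\ne j}s_i^2s_j^2=(\sum_i s_i^2)^2-\sum_i s_i^4$, take square roots, and invoke $\|E\|_2\le\|E\|_F$. Your two extra remarks are accurate: the last inequality just drops the nonnegative $\sum_i s_i^4$, and $V_{ii}>0$ is needed only so that $\epsilon^{\mathrm{cov}}$ is well defined.
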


Combining \eqref{eq:eps-bound} with the perturbation bounds connects a
pairwise residual-correlation diagnostic to portfolio sensitivity. It does
not transform the diagnostic into a test of mutual conditional independence.
A training diagnostic can be used for representation selection; on a
disjoint segment we distinguish a \emph{set-refit} diagnostic, which fixes
the selected labels and re-estimates nuisance models, from a
\emph{frozen-model} diagnostic, which carries the fitted parameters forward.
Neither is automatically a statistical upper confidence bound. The term
\emph{certified} is reserved for a procedure with an explicit population
coverage guarantee. Corollary~\ref{cor:joint-moment-app} in
Appendix~\ref{app:proofs} gives the simultaneous first-order effect of mean
and covariance perturbations.

\subsubsection{Estimation implication}

The decomposition implies that the uncorrected structured estimator
\begin{equation}
\widehat Q^{0}
:=
\widehat D+\widehat L\widehat\Lambda\widehat L^{\top}
\label{eq:Q0hat}
\end{equation}
imposes both $\Omega=0$ and $E=0$. A residual-aware estimator instead has
the form
\begin{equation}
\widehat Q^{R}
=
\widehat Q^{0}+\widehat R_{\alpha}.
\label{eq:QRhat}
\end{equation}
Here $\widehat R_{\alpha}$ must be estimated and regularized using
information available at the decision date. One positive-semidefinite
construction starts from a residual-aware covariance estimator
$\widetilde Q\succeq0$ fitted on the same training information and sets
\begin{equation}
\widehat Q^{R}_{\alpha}
:=
(1-\alpha)\widehat Q^{0}+\alpha\widetilde Q,
\qquad
\widehat R_{\alpha}
:=
\alpha(\widetilde Q-\widehat Q^{0}),
\qquad
\alpha\in[0,1].
\label{eq:residual-shrinkage}
\end{equation}
This parameterization preserves positive semidefiniteness and makes the
uncorrected restriction ($\alpha=0$) explicit. It does not identify
$\Omega$ and $E$ separately; that distinction remains a population
interpretation unless additional structure is available. The formula is an
estimation implication of the decomposition, not a claim that a particular
residual correction is optimal. Any application must choose $\alpha$ out of
sample and compare \eqref{eq:QRhat} with both \eqref{eq:Q0hat} and the
unrestricted estimator $\widetilde Q$ to distinguish information content
from generic regularization.

\section{Numerical and market evidence}\label{sec:numerical}

The controlled experiments use simulated data-generating processes for which the population information structure is known. Their role is to separate the theoretical implications, assess finite-sample behaviour and check the numerical implementation; the proofs do not rely on simulation evidence. The baseline uses independent Gaussian AR(1) drivers, while the robustness suite adds a heavy-tailed VAR(2), Student-$t$ dynamic conditional correlation, state-varying loadings, random covariance perturbations and gradual proxy failure. Asset loadings are drawn from centred Gaussian distributions and idiosyncratic daily volatilities lie between $0.8\%$ and $2\%$ unless stated otherwise.

\subsection{Two information windows}

The first experiment isolates the distinction between decision information and horizon-closed screening information. A Gaussian panel with $n=30$ assets and $m=3$ common drivers is residualized first on the current state alone and then on the state together with the realized driver innovation. The maximal absolute off-diagonal residual correlation is $0.83$ under past-only conditioning and $0.025$ under horizon-closed conditioning on a long simulated path. Figure~\ref{fig:e1} displays the two residual-correlation matrices.

\begin{figure}[tbp]
\centering
\includegraphics[width=0.72\textwidth]{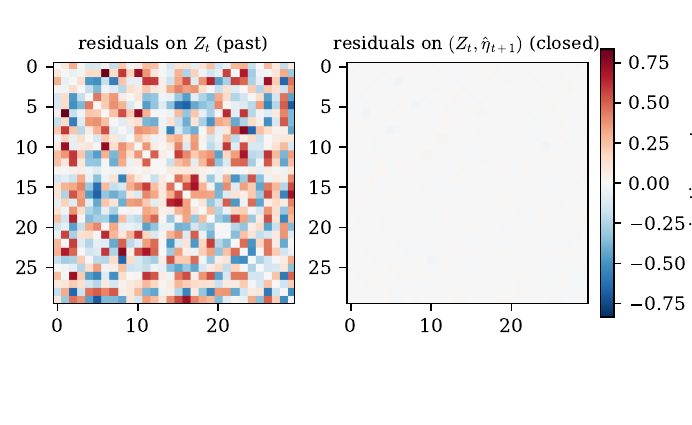}
\caption{Controlled two-window illustration. Conditioning only on the decision-time state leaves the common horizon innovation in the residual covariance (left). Conditioning on the horizon-closed state and realized innovation removes that systematic block in the Gaussian normal form (right). The portfolio decision itself uses only date-$t$ information.}
\label{fig:e1}
\end{figure}

The experiment illustrates the covariance implication of screening rather than estimating the full product law from data. In this Gaussian design the stronger independence statement is known from the construction.

\subsection{Out-of-sample covariance estimation}

The second experiment asks whether exploiting an observed affine innovation structure helps when that structure is correctly specified. Global-minimum-variance portfolios are formed from the sample covariance, Ledoit--Wolf shrinkage, PCA with different factor counts, the structured estimator $\widehat D+\widehat L\widehat\Lambda\widehat L^{\top}$, and the oracle covariance. In each training sample, driver AR dynamics are fitted with intercepts, historical innovations are extracted, returns are regressed on lagged driver states and those innovations, and the covariance is assembled from the estimated innovation covariance and residual variances. The design uses $n\in\{50,100,200\}$, training lengths $T\in\{126,252,504\}$, 250 Monte Carlo replications per cell and 252 out-of-sample observations. Figure~\ref{fig:e2} reports the $T=252$ slice and its variance calibration.

\begin{figure}[tbp]
\centering
\includegraphics[width=0.95\textwidth]{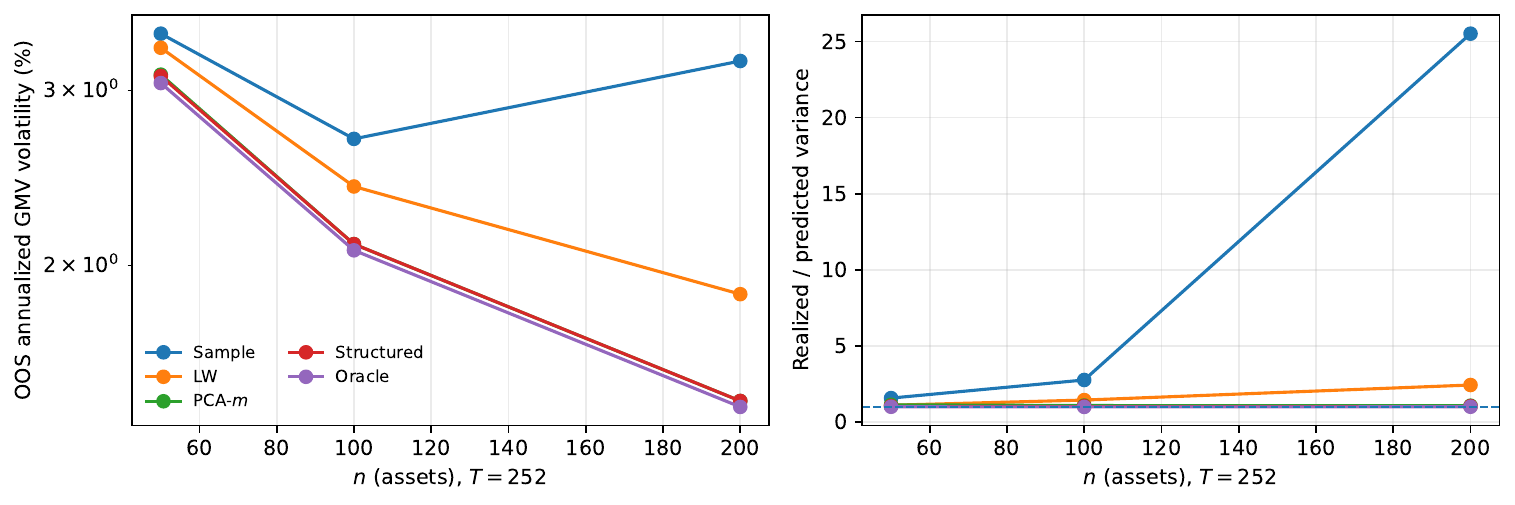}
\caption{Controlled covariance estimation at $T=252$. Left: out-of-sample annualized GMV volatility. Right: realized-to-predicted variance, where one denotes perfect calibration. The structured estimator is close to the oracle under the correctly specified affine design; PCA with the correct factor count provides an important latent-factor benchmark.}
\label{fig:e2}
\end{figure}

The unrestricted sample estimator deteriorates as $n/T$ rises. The structured estimator remains close to the oracle in realized volatility and variance calibration, while PCA supplied with the correct factor dimension is also close. The comparison is therefore not evidence that observable-driver structure must beat a correctly dimensioned latent factor model. It shows that, when the affine screening representation is correct, the theoretical covariance form can be estimated without first recovering latent factors. Ledoit--Wolf remains a strong benchmark and is not uniformly dominated. The full experiment grid is reported in Appendix~\ref{app:experiments}.

\subsection{Separating response error from screening error}

The robustness experiment uses the decomposition $R=\Omega+E$ directly. The baseline has $n=60$ assets and $m=3$ Gaussian driver innovations. Let $\underline\sigma^2:=\min_iD_{ii}$ and parameterize perturbation size by
\begin{equation}
\rho:=\frac{\|R\|_2}{\underline\sigma^2}\in[0,0.8].
\label{eq:rho}
\end{equation}
For the response channel, screening is exact and the horizon-closed conditional mean contains a quadratic Hermite component orthogonal to the linear innovation span, producing $\Omega_\rho\succeq0$ and $E=0$. For the screening channel, the systematic response is affine and a zero-diagonal residual covariance perturbation produces $E\ne0$ and $\Omega=0$. Figure~\ref{fig:e5} compares the exact portfolio displacement, its first-order approximation and the finite bound across the two channels.

\begin{figure}[tbp]
\centering
\includegraphics[width=0.98\textwidth]{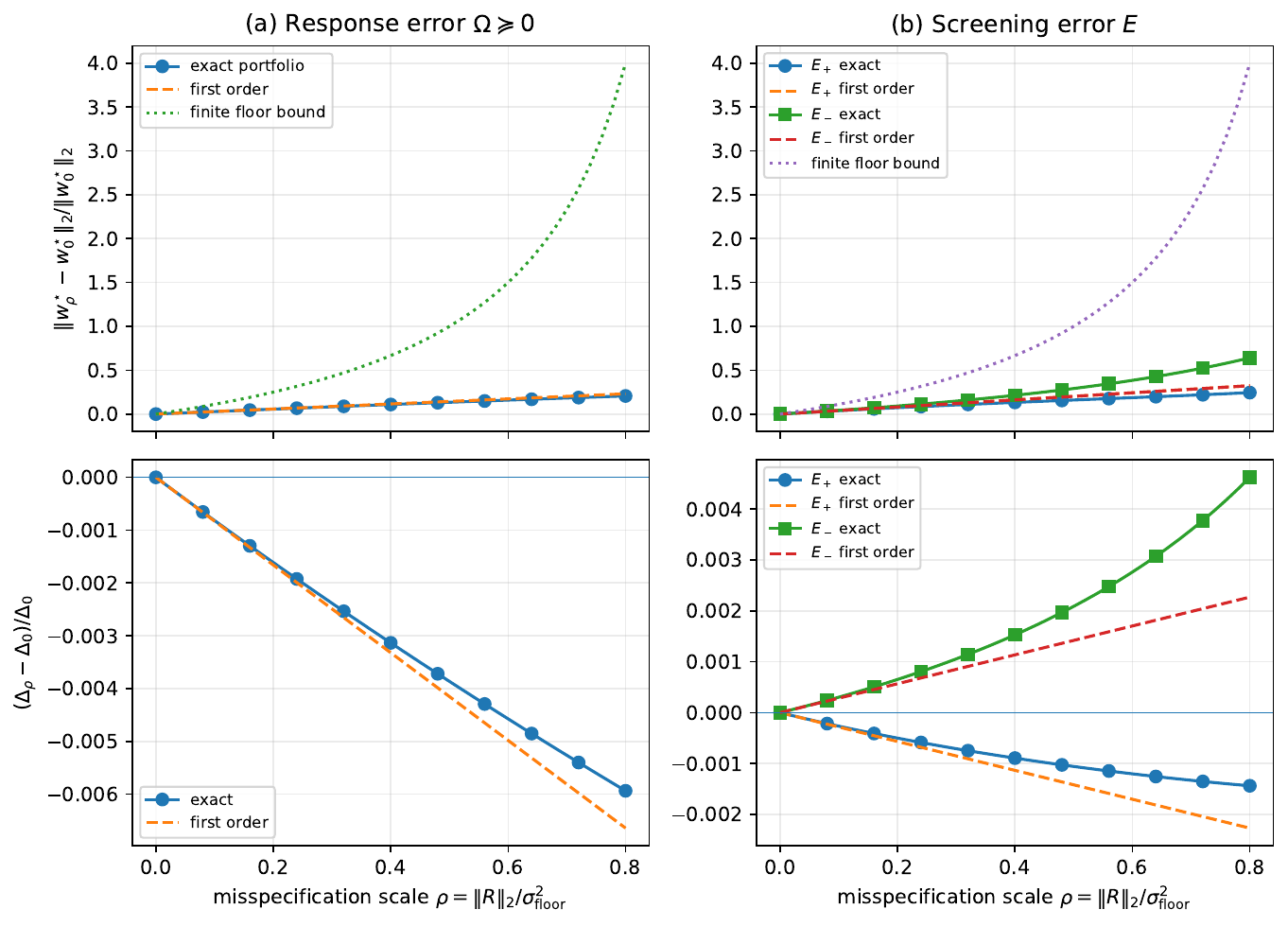}
\caption{Two covariance-misspecification channels. Left: exact screening with nonlinear response and $\Omega_\rho\succeq0$. Right: affine response with zero-diagonal residual screening error $E_\rho$ of either sign. Top panels show relative portfolio displacement with the first-order approximation and finite bound; bottom panels show the relative change in frontier potential.}
\label{fig:e5}
\end{figure}

Across the perturbation grid, the exact identities in Theorem~\ref{thm:perturbation} hold to numerical precision. At $\rho=0.5$, omitted response risk moves the portfolio by $13.4\%$ in relative Euclidean norm and reduces $\Delta$ by $0.39\%$; the first-order values are $14.4\%$ and $-0.42\%$. Reversing the sign of the residual-screening perturbation reverses the direction of its effect on $\Delta$, illustrating why the one-sided result is specific to $\Omega\succeq0$.

\subsection{Observational equivalence broken by intervention}

The final controlled experiment distinguishes observational equivalence from causal invariance. Returns are generated from a two-state normal form with $n=40$. The common-state representation uses $(Z^1,Z^2)$. The alternative records the first state through
\begin{equation}
P_t=-0.4+1.7Z^1_t,
\label{eq:proxy}
\end{equation}
so $\sigma(P,Z^2)=\sigma(Z^1,Z^2)$ before intervention. With intercepts in both stages of estimation, the two representations produce fitted covariances, conditional means and portfolio weights that agree to numerical precision across 250 Monte Carlo replications. At the intervention date the proxy measurement equation is replaced by
\begin{equation}
do\!\left(P_t:=-0.4+1.7\widetilde Z^1_t\right),
\label{eq:proxy-intervention}
\end{equation}
where $\widetilde Z^1$ is an independent stationary AR(1) copy with the same marginal process law as $Z^1$. The structural equations for the true state and returns are unchanged. Figure~\ref{fig:e6} shows that the two representations coincide before the intervention and separate afterwards.

\begin{figure}[tbp]
\centering
\includegraphics[width=0.95\textwidth]{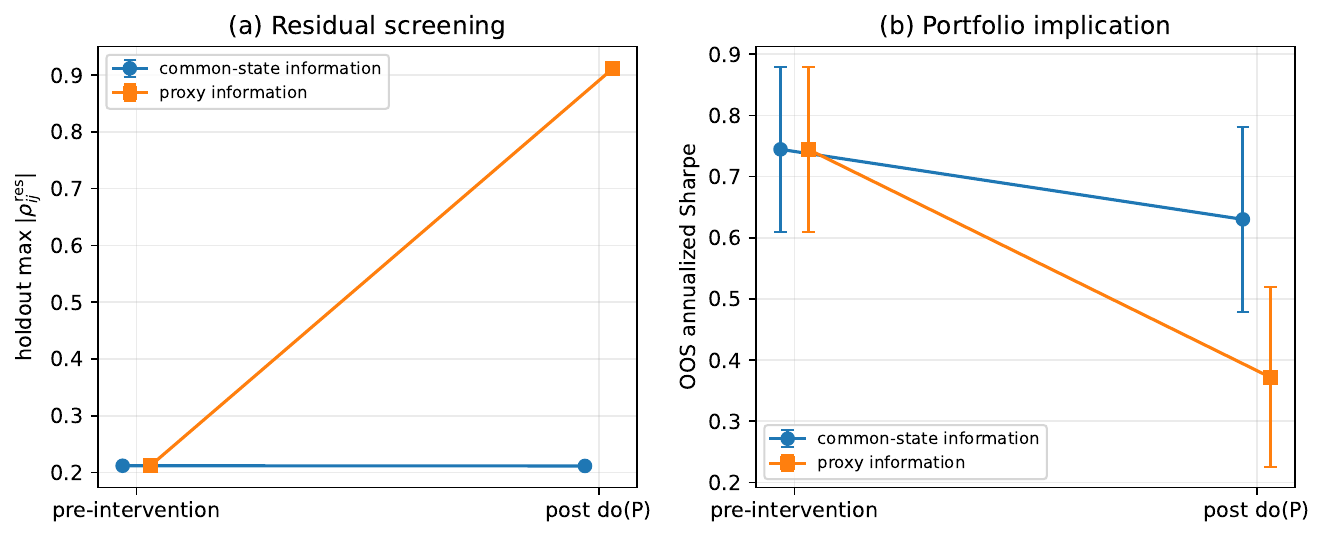}
\caption{Information-equivalent proxy under intervention. Points are Monte Carlo means and error bars are $\pm2$ Monte Carlo standard errors. Before intervention the common-state and proxy representations generate numerically identical portfolio objects. After the proxy measurement mechanism is severed from the state, residual screening deteriorates for the proxy while the common-state representation remains stable.}
\label{fig:e6}
\end{figure}

Before intervention the mean holdout maximal absolute residual correlation is $0.212$ for both representations. After intervention it is $0.211$ for the common-state model and $0.911$ for the proxy. Frictionless synthetic Sharpe ratios are identical before intervention and separate afterwards. These return statistics are model diagnostics, not claims about implementable performance. The identification result is the divergence in screening after the intervention, not the pre-intervention fit.

\subsection{Dynamic misspecification and graded robustness}

The additional covariance experiment removes reliance on the AR(1)-Gaussian benchmark. Four designs are considered: Gaussian AR(1), heavy-tailed VAR(2), Student-$t$ dynamic conditional correlation in the sense of \citet{Engle2002}, and a nonlinear state-loading design. Each design has 80 independent Monte Carlo replications. Figure~\ref{fig:e9} compares relative Frobenius loss and realized-to-predicted GMV variance. Under the first three designs, the structured estimator's median calibration is 1.032, 1.021 and 1.026, respectively. Under state-varying loadings it falls to 0.749; blending in residual covariance raises it to 0.892. The exercise therefore supports the affine representation when it is correctly specified and exposes, rather than conceals, its failure under nonlinear loadings.

\begin{figure}[tbp]
\centering
\includegraphics[width=0.96\textwidth]{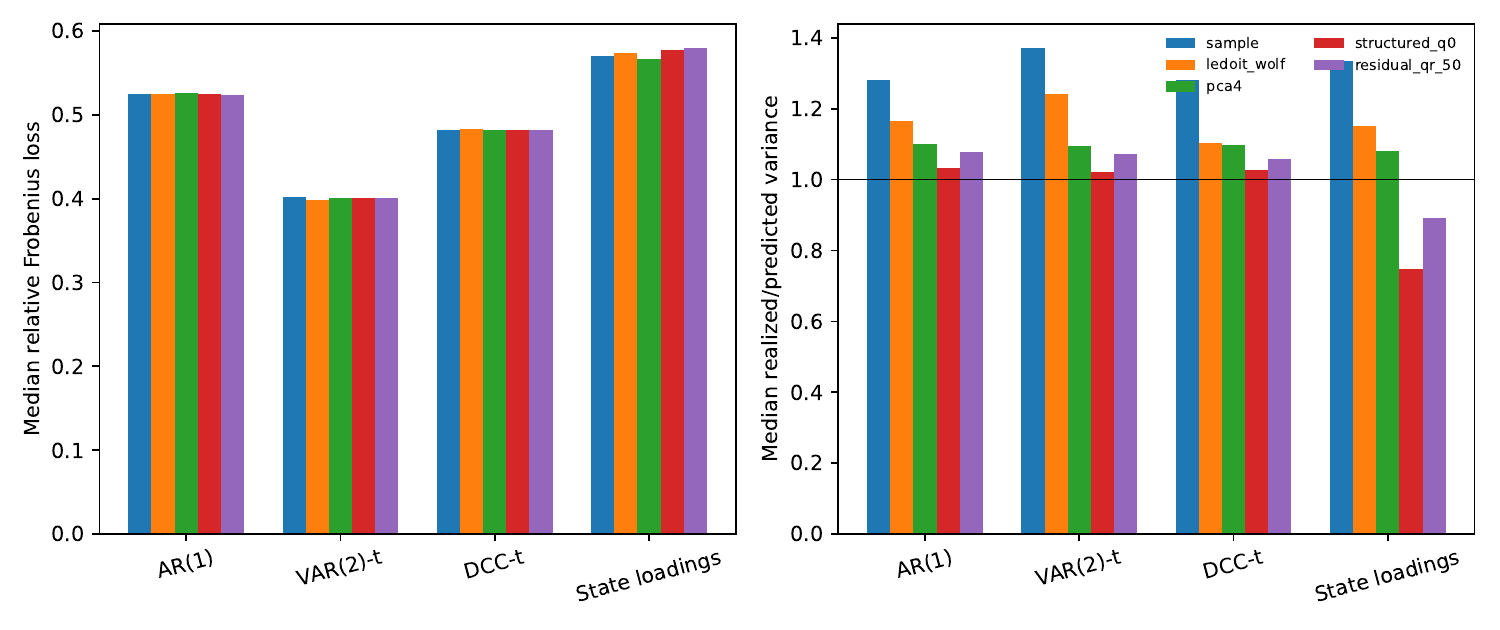}
\caption{Controlled covariance results beyond the AR(1)-Gaussian baseline. Left: median relative Frobenius loss. Right: median realized-to-predicted variance; one denotes calibration. VAR(2)-$t$ and DCC-$t$ preserve the main calibration result, while state-varying loadings create material underprediction that residual regularization partly repairs.}
\label{fig:e9}
\end{figure}

The perturbation identities are also tested over 200 random orientations per channel and perturbation size. The finite bound covers every draw. At $\rho=0.5$, median relative weight displacement is 3.2\% for omitted response, 9.7\% for incomplete screening and 4.4\% when both channels are present; at $\rho=0.8$ the corresponding values are 5.5\%, 15.3\% and 6.5\%. Screening perturbations move the frontier in either direction, whereas the response channel retains the one-sided effect implied by $\Omega\succeq0$.

Finally, the binary proxy intervention is replaced by a grid of gradual breaks and measurement noise. With an exact pre-break proxy the residual-score gap rises monotonically from zero to 0.516 as the measurement link is severed. Noise alone also matters: at noise standard deviation 0.3 the gap is already 0.112 without a structural break and reaches 0.505 under a complete break. Figure~\ref{fig:e10} reports both extensions.

\begin{figure}[tbp]
\centering
\includegraphics[width=0.96\textwidth]{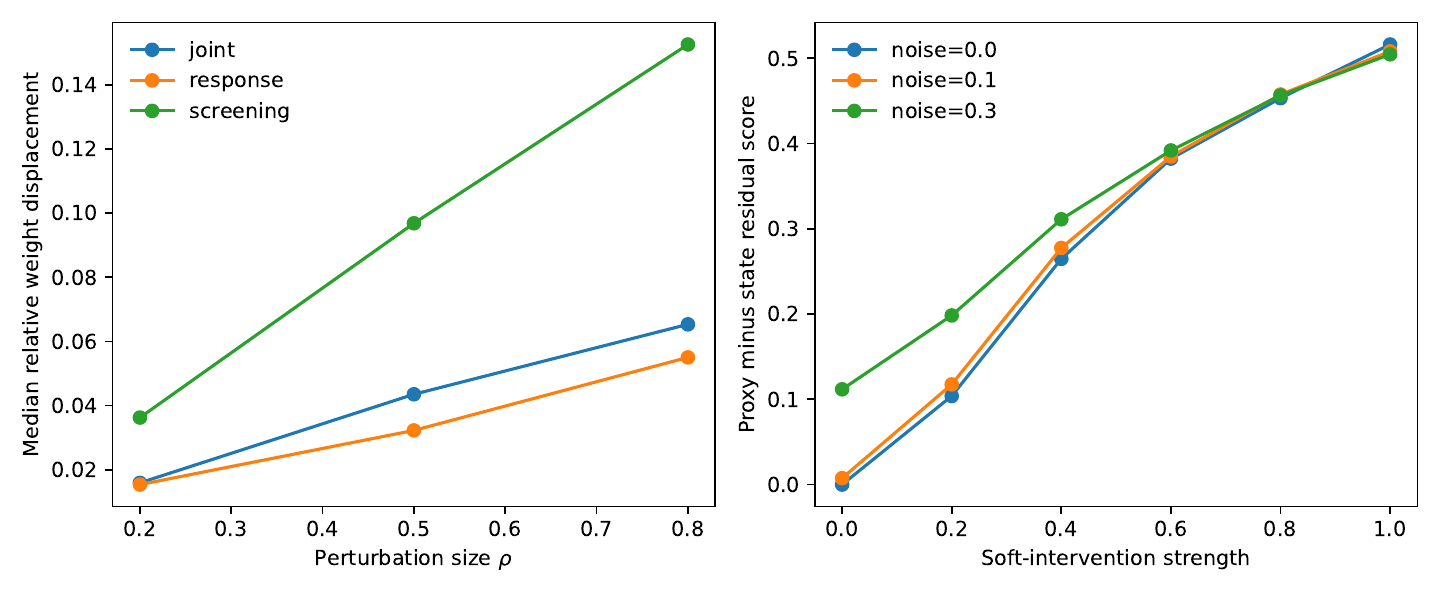}
\caption{Graded robustness. Left: median portfolio displacement under randomly oriented response, screening and joint perturbations. Right: deterioration of a proxy representation as measurement noise and soft-intervention strength increase.}
\label{fig:e10}
\end{figure}

Taken together, the experiments establish five limited but complementary points. Horizon-closed information is needed to diagnose common innovations without introducing look-ahead into the portfolio decision; a correctly specified affine representation can support competitive covariance estimation under more than one driver process; state-varying response is a material failure mode; omitted response and incomplete screening propagate differently to allocations; and observational equivalence does not imply invariance to intervention. Because every experiment is simulated under a known population model, none of these results is presented as market evidence or as empirical causal identification.

\subsection{Market data, timing and interpretation}\label{sec:market-audit}

The empirical analysis combines three data sources. Daily closing prices for
the fixed panel of 150 ticker-labelled U.S. equities were obtained from
Tiingo \citep{Tiingo2026}. Candidate economic and financial driver data were
sourced from Bloomberg \citep{Bloomberg2026}. The declared driver universe
contains 127 economic and financial series spanning developed- and
emerging-market equity indices, sovereign yield curves and swap rates,
foreign-exchange rates, commodity markets, credit-market measures and
market-volatility indicators. Four series are excluded ex ante for frequency
or data-integrity reasons, leaving 123 transformed daily candidates eligible
for the equity experiment. Eligibility is determined separately after
alignment to each empirical panel and therefore need not be identical across
applications.

The hedge-fund panel contains monthly returns for 17 BarclayHedge hedge-fund
strategy indices \citep{BarclayHedge2026}. The equity sample runs from
2 July 2010 to 30 June 2023. The hedge-fund sample runs from January 1997
to December 2025, with 269 complete month-end observations after alignment
with eligible drivers.

Equity prices and positive price-like driver series are converted to log
changes, while yields, rates and other series for which log changes are not
well defined are first-differenced according to a frozen field-level
transformation ledger. Declared missing-value sentinels are removed.
Local-market closures may carry the last driver observation for at most three
source rows before transformation; longer gaps remain missing. Truncated
series and low-frequency releases that cannot be aligned to the relevant
calendar are excluded. The stock panel adjusts documented split-like
discontinuities and excludes reviewed distribution dates without bridging
missing asset observations.

Every empirical specification uses rolling, temporally ordered folds. Model selection, transformations and coefficients are estimated on the training block and then frozen for the non-overlapping test block. The screening statistic, dimension penalty, model-size limit, fold boundaries, random seeds and multiplicity adjustment are fixed. This design makes the market exercises reproducible for researchers with authorized access to the underlying data.

\subsection{Residual-covariance screening}

The screening analysis uses 150 Tiingo U.S. equities with 3,196 complete
daily observations and 123 eligible Bloomberg candidate drivers, and 17
BarclayHedge hedge-fund strategy indices with 269 complete monthly
observations and 119 eligible Bloomberg candidate drivers. The selector
minimizes $s_F(\mathcal D)+0.006|\mathcal D|$, where $s_F$ is normalized
off-diagonal Frobenius residual correlation, subject to
$|\mathcal D|\leq 6$. Equity folds use 504 training sessions followed by
126 test sessions; hedge-fund folds use 120 training months followed by
12 test months. Selection occurs only on the training block and all
coefficients are frozen before evaluating the next test block.

Table~\ref{tab:market-screening} reports the frozen outer-fold results.
Equities improve in all 21 folds, with a 38.1\% median reduction and
Holm-adjusted $p=0.000010$. The hedge-fund indices improve in all 12
folds, with a 25.0\% median reduction and adjusted $p=0.000488$. Median
selected dimensions are three for equities and six for hedge-fund indices.
VIX is selected in every equity fold, while the remaining labels vary
across folds and panels; the result supports compact,
representation-specific screening rather than a universal named factor set.

\begin{table}[tbp]
\centering
\caption{Frozen multi-asset residual-covariance screening. The score change is selected minus unconditioned normalized off-diagonal Frobenius residual correlation; negative values favour screening. Confidence intervals and $p$-values use paired fold resampling and Holm adjustment within each family.}
\label{tab:market-screening}
\small
\begin{tabular}{lrrrrrr}
\hline
Panel & Assets & Obs. & Folds & Median red. & Mean change & Holm $p$ \\
\hline
Equities & 150 & 3,196 & 21 & 38.1\% & $-0.1229$ & 0.000010 \\
BarclayHedge hedge-fund indices & 17 & 269 & 12 & 25.0\% & $-0.1740$ & 0.000488 \\
\hline
\end{tabular}
\end{table}

Two equity robustness checks reinforce the result. State interactions use the training-sample median VIX and add high-state intercept and slope terms without changing the selected labels; they reduce the residual score by a further 0.0090 on average (Holm-adjusted $p=0.000170$). Non-overlapping five-session returns retain a 26.5\% median reduction across 20 folds (adjusted $p=0.000010$). Thus the daily result is not confined to a single-day horizon or a constant loading relation.

\subsection{Covariance estimation and GMV portfolios}

The covariance experiment uses the same fixed panel of 150 U.S. equities
obtained from Tiingo \citep{Tiingo2026}. It contains 3,247 synchronous daily
log price-return dates from 2 July 2010 to 30 June 2023 and 3,191 complete
dates after intersection with the eligible Bloomberg driver information. The
small difference from the 3,196 observations used for screening reflects the
separate driver-alignment step of each exercise.
Documented split-like discontinuities are adjusted, reviewed distribution
dates are excluded, and missing prices are neither imputed nor bridged.

For training windows of 252 and 504 sessions, each followed by a non-overlapping 126-session test block, the analysis compares the sample covariance, linear and quadratic nonlinear shrinkage \citep{LedoitWolf2004,LedoitWolf2022}, OAS, ridge regularization, matched and expanded PCA, POET soft residual thresholding \citep{FanLiaoMincheva2013}, equal weighting, the diagonal-residual estimator $\widehat Q^0$ and the residual-aware estimator $\widehat Q^R_\alpha$ in \eqref{eq:residual-shrinkage}. The mixing weight is selected inside each training window; its median is $\alpha=0.50$ for both designs. The drivers that define the conditioning representation in each covariance fold are not selected on the 150-stock panel itself. Selecting drivers on the same panel whose covariance is being evaluated would let the choice of conditioning information adapt to the evaluation assets, so the experiment instead takes, for each fold, the most recent frozen same-date driver selection certified before the fold's training end from a screening run on a separate 12-equity development panel (102 eligible candidate drivers, 504 selection and 252 certification sessions, dimension penalty 0.01, at most eight drivers). The selected driver sets in the covariance folds contain two to four variables. Driver selection and covariance evaluation therefore use disjoint asset panels, and the covariance results measure the value of a given driver representation rather than of driver search on the evaluation panel. Table~\ref{tab:market-gmv} reports pooled annualized volatility and the median outer-fold ratio of realized to predicted variance for selected benchmarks.

\begin{table}[tbp]
\centering
\caption{Exploratory unconstrained GMV results. Volatility is pooled and annualized; calibration is the median outer-fold realized-to-predicted variance ratio.}
\label{tab:market-gmv}
\small
\begin{tabular}{lrrrr}
\hline
& \multicolumn{2}{c}{252-day training} & \multicolumn{2}{c}{504-day training} \\
Estimator & Vol. (\%) & Calibration & Vol. (\%) & Calibration \\
\hline
Equal weight & 18.74 & 0.95 & 18.46 & 0.91 \\
Sample covariance & 17.12 & 8.80 & 14.48 & 2.65 \\
Ledoit--Wolf & 13.68 & 3.20 & 13.79 & 1.99 \\
Quadratic nonlinear shrinkage & 13.39 & 1.96 & 13.69 & 1.75 \\
POET, soft threshold & 13.21 & 2.71 & 13.52 & 2.62 \\
Ridge, 10\% & 13.43 & 2.18 & 13.55 & 1.56 \\
$\widehat Q^0$ & 15.01 & 6.40 & 15.84 & 5.86 \\
$\widehat Q^R$ & 13.40 & 2.32 & 13.67 & 2.23 \\
\hline
\end{tabular}
\end{table}

Multiplicity-adjusted paired tests show lower realized variance for $\widehat Q^R$ than for equal weighting, matched PCA, PCA plus two and $\widehat Q^0$ in both training designs; it also improves on the sample covariance with 252 training days. The estimator is statistically indistinguishable from ridge, Ledoit--Wolf, OAS, quadratic nonlinear shrinkage and POET. The evidence therefore supports the paper's targeted claim: modelling the remaining residual covariance materially repairs the diagonal-residual restriction while remaining competitive with established regularization.

The covariance result survives two robustness checks. In long-only portfolios, $\widehat Q^R$ volatility is 14.23\% and 14.54\%, with calibration ratios of 1.23 and 1.45. Removing every reviewed corporate-action or distribution date gives volatilities of 13.80\% and 14.06\%. In both cases, $\widehat Q^R$ continues to improve on equal weighting and $\widehat Q^0$ and remains statistically comparable to the strong shrinkage estimators.

\FloatBarrier
\section{Conclusion}\label{sec:conclusion}

Conditional portfolio choice requires more than estimating moments accurately:
it requires establishing whether the information used to condition those
moments represents the common dependence relevant to the portfolio decision.
This paper makes that requirement explicit and testable. It distinguishes
causal, distributional and second-moment separation and shows that
mean--variance portfolio choice requires only the weakest of these conditions.
Causal identification is therefore sufficient for the portfolio problem under
the stated structural assumptions, but it is not necessary for second-moment
portfolio separation.

The resulting covariance decomposition identifies two fundamentally different
forms of representation failure. Omitted systematic response is positive
semidefinite and, when ignored, necessarily overstates attainable
mean--variance opportunities. Residual screening error has no fixed sign and
can distort the frontier in either direction. Exact finite perturbation
identities and non-asymptotic bounds then show how these failures propagate
from the conditioning representation to portfolio weights and frontier
potential. A diagonal-plus-low-rank covariance model is consequently not an
automatic consequence of conditioning on observed drivers: it is a joint
restriction on screening and the response representation that should be
tested before it is imposed.

The empirical evidence supports the operational relevance of this distinction.
Across frozen out-of-sample tests, compact representations selected from a
broad economic and financial information universe reduce residual dependence
in every test fold for both a 150-stock U.S. equity panel and 17 hedge-fund
strategy indices. In equities, explicitly modelling the residual covariance
left after conditioning materially improves on the diagonal-residual estimator
while remaining competitive with established covariance regularizers. The
controlled experiments further show where the framework should fail:
nonlinear state-dependent response breaks the affine representation, and an
observationally equivalent proxy loses its screening property when its link to
the common state is severed.

The broader implication is that the conditioning information should not be
treated as an innocuous input to portfolio optimization. It is part of the
risk model itself. Screening provides a criterion for determining what that
information must remove, the decomposition identifies what remains when it
does not, and the portfolio results quantify the economic consequences of
that failure. This places representation adequacy logically prior to covariance
regularization and portfolio optimization, without requiring stronger causal
claims than the decision problem supports.

Future work can extend the framework to nonlinear response bases, intraday
information sets and portfolio objectives beyond mean--variance choice, and
study how screening representations evolve across market regimes.

\section*{Acknowledgements}
The author thanks Miralta Finance Bank S.A. and his colleagues, as well as J.~D. Opdyke, Miquel Noguer Alonso, Igor Halperin, Daniel Polakow and Mario Bajo Traver, for valuable discussions. 

\section*{Disclosure statement}
The author is employed by Miralta Finance Bank S.A., which provided institutional access to the licensed market data used in this study. The views expressed are those of the author and do not necessarily reflect those of Miralta Finance Bank S.A. The author reports no other competing interests to declare.

\section*{Funding}
No specific funding was received for this research.

\section*{Declaration of generative AI use}
Generative AI (Anthropic's Claude) was used to assist with language editing, LaTeX consistency checks and the preparation of submission materials. The research design, mathematical results, proofs, code and empirical analyses are the author's own, and the author reviewed and takes full responsibility for all content.

\section*{Data and code availability statement}
The complete code, deterministic configurations, tests and derived outputs are publicly available in the reproducibility package at \url{https://doi.org/10.5281/zenodo.21224465}. The market data are third-party series whose redistribution depends on the providers' licences and are therefore not included: daily U.S. equity prices from Tiingo \citep{Tiingo2026}, candidate economic and financial drivers from Bloomberg \citep{Bloomberg2026}, and hedge-fund strategy indices from BarclayHedge \citep{BarclayHedge2026}. Every dataset is fully described in Section~\ref{sec:market-audit}: asset universe, driver identifiers and transformation rules, sample periods, alignment and adjustment procedures. Together with the deposited code, this description is sufficient to replicate all results; a researcher with licensed access to the three sources needs only to place the raw files in the documented location and run the scripts in the stated order.

\appendices

\section{Proofs and technical details}\label{app:proofs}

This appendix records proofs and secondary derivations supporting the results stated in the main text.

\subsection{Proof of causal soundness}

\begin{proof}[Proof of Proposition~\ref{prop:causal-soundness}]
Condition on the realized common path $C_{t,h}$. By
Assumption~\ref{ass:causal}, each asset path is a measurable function of
$(C_{t,h},U^i)$ and the private exogenous blocks
$U^1,\ldots,U^n$ are mutually independent. To make explicit why conditioning on the observed pre-$t$ asset histories
preserves the required product structure, define
\[
\mathcal X_i
:=
\sigma\!\left(R^i_s:s\in[t-\delta,t+h]\right),
\qquad
\mathcal P_i
:=
\sigma\!\left(R^i_s:s\in[t-\delta,t]\right).
\]
Under Assumption~\ref{ass:causal},
\[
\mathcal X_i\subseteq\sigma(C_{t,h},U^i),
\qquad
\mathcal P_i\subseteq\mathcal X_i.
\]
Conditional on $\sigma(C_{t,h})$, the sigma-algebras
$\mathcal X_1,\ldots,\mathcal X_n$ are therefore mutually independent. For bounded measurable functions $\varphi_1,\ldots,\varphi_n$,
coordinatewise conditioning gives
\[
\E\!\left[
\prod_{i=1}^n\varphi_i(r^i_{t+h})
\,\middle|\,
\sigma(C_{t,h})\vee\bigvee_{i=1}^n\mathcal P_i
\right]
=
\prod_{i=1}^n
\E\!\left[
\varphi_i(r^i_{t+h})
\,\middle|\,
\sigma(C_{t,h})\vee\mathcal P_i
\right]
\quad\text{a.s.}
\]
Indeed, conditional on the common state, each factor and its associated
conditioning history are measurable with respect to the corresponding
private block $U^i$, and the private blocks are mutually independent.
Thus conditioning further on the coordinatewise histories does not introduce
cross-coordinate information. Since
\[
\F^{\mathcal A}_t
=
\bigvee_{i=1}^n\mathcal P_i,
\]
the preceding identity implies
\[
\LL(r_{t+h}\mid\F^{\mathcal A}_t\vee\sigma(C_{t,h}))
=
\bigotimes_{i=1}^n
\LL(r^i_{t+h}\mid\F^{\mathcal A}_t\vee\sigma(C_{t,h})),
\]
which is Definition~\ref{def:separator}. Under square integrability, the
product law makes the horizon-closed covariance diagonal almost surely;
taking its decision-time conditional expectation yields
Definition~\ref{def:second-moment-separator}. The same argument does not
apply automatically after adding arbitrary extra conditioning variables,
hence the restriction in the statement.
\end{proof}

\begin{proof}[Proof of Proposition~\ref{prop:obs-intervention}]
First consider observational equivalence. Suppose the exact distributional
separator $C$ and the observable representation $S$ generate the same
decision-time and horizon-closed sigma-algebras, up to completion. Conditional
expectations with respect to equal sigma-algebras are equal almost surely.
Consequently, for every bounded measurable function $\varphi$,
\[
\E[\varphi(r_{t+h})\mid\Hh^C_{t,h}]
=
\E[\varphi(r_{t+h})\mid\Hh^S_{t,h}]
\quad\text{a.s.},
\]
and analogously at the decision-time information level. Hence the two
representations generate the same conditional distributions, up to
almost-sure versions, and therefore the same conditional-law functionals,
conditional moments and portfolio objects appearing in the paper.

Now consider the intervention statement under
Assumption~\ref{ass:causal}. A modular intervention on a variable outside
the ancestral system of the common state and the horizon returns leaves the
structural assignments generating $(C,r)$ unchanged. In the intervened SCM,
the corresponding truncated factorization therefore leaves the joint law of
the exogenous and endogenous variables in that ancestral subsystem unchanged.
Provided that the intervention also leaves the conditioning information
itself unchanged, the conditional law of $r_{t+h}$ given the common-state
information is consequently unchanged.

Thus observational equality follows from equality of the generated
information, whereas intervention invariance requires the additional
structural assumptions stated in the proposition. In particular,
observational equivalence alone does not identify unique causal labels.
\end{proof}

\subsection{Risk decomposition and local nonlinear response}

\begin{proof}[Proof of Theorem~\ref{thm:risk-decomp}]
Since $\G_t\subseteq\Hh_{t,h}$, the conditional law of total covariance gives
\[
\Cov(r_{t+h}\mid\G_t)
=
\E[\Cov(r_{t+h}\mid\Hh_{t,h})\mid\G_t]
+
\Cov(\E[r_{t+h}\mid\Hh_{t,h}]\mid\G_t).
\]
The first term is $V=D+E$ by \eqref{eq:DE}. The coefficient $L$ in
\eqref{eq:loading} is the conditional $L^2$ projection coefficient of
$m-\mu$ on $\widetilde\xi$. Hence
$\Cov(u,\widetilde\xi\mid\G_t)=0$. Writing
$m-\mu=L\widetilde\xi+u$ and taking conditional covariances removes the
cross terms and gives
\[
\Cov(m\mid\G_t)
=
L\Lambda L^{\top}+\Cov(u\mid\G_t).
\]
Combining the two identities proves \eqref{eq:master}. The rank statement
follows from the dimensions of $L$ and $\Lambda$.
\end{proof}

\begin{proof}[Proof of Corollary~\ref{cor:separation-restrictions}]
Definition~\ref{def:second-moment-separator} is equivalent to $V$ being
diagonal, which is equivalent to $E=0$ by \eqref{eq:DE}. Exact
distributional separation makes
$\Cov(r_{t+h}\mid\Hh_{t,h})$ diagonal almost surely, so its
$\G_t$-conditional expectation is diagonal. Finally,
$\E[u\mid\G_t]=0$, and therefore
$\Omega=\E[uu^{\top}\mid\G_t]$ vanishes if and only if $u=0$ almost surely.
Substitution in \eqref{eq:master} and the rank bound in
\eqref{eq:projection-properties} give \eqref{eq:dlr}.
\end{proof}

\begin{proposition}[Local nonlinear response]\label{prop:local-response-app}
Let $\tau\downarrow0$ scale the horizon innovations as
\[
\xi_{t,h}^{(\tau)}=\tau z_{t,h},
\qquad
\E[z_{t,h}\mid\G_t]=0,
\qquad
\Cov(z_{t,h}\mid\G_t)=\Lambda_{t,h}\succ0.
\]
Suppose the horizon-closed conditional mean satisfies
\begin{equation}
m_{t,h}^{(\tau)}
=
\mu_{t,h}^{(\tau)}
+\tau J_{t,h}z_{t,h}
+R_{t,h}^{(\tau)},
\label{eq:local-response-expansion-app}
\end{equation}
where $\E[R_{t,h}^{(\tau)}\mid\G_t]=0$ and
\begin{equation}
\left(
\E[\|R_{t,h}^{(\tau)}\|_2^2\mid\G_t]
\right)^{1/2}
=o(\tau).
\label{eq:local-response-remainder-app}
\end{equation}
Here and below, the conditional little-$o$ statements are understood almost
surely with respect to the decision-time information: after division by the
indicated deterministic power of $\tau$, the corresponding conditional norm
converges to zero almost surely as $\tau\downarrow0$.

Then, in conditional operator norm,
\begin{align}
\Cov(m_{t,h}^{(\tau)}\mid\G_t)
&=
\tau^2J_{t,h}\Lambda_{t,h}J_{t,h}^{\top}
+o(\tau^2),
\label{eq:local-systematic-app}\\
L_{t,h}^{(\tau)}
&=
J_{t,h}+o(1),
\qquad
\Omega_{t,h}^{(\tau)}
=
o(\tau^2).
\label{eq:local-projection-app}
\end{align}
Under second-moment separation,
\begin{equation}
Q_{t,h}^{(\tau)}
=
D_{t,h}^{(\tau)}
+
\tau^2J_{t,h}\Lambda_{t,h}J_{t,h}^{\top}
+
o(\tau^2).
\label{eq:local-response-app}
\end{equation}
\end{proposition}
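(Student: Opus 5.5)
We substitute the expansion \eqref{eq:local-response-expansion-app} into the definitions of Section~\ref{sec:risk} and control every remainder with conditional Cauchy--Schwarz. Write $a:=\tau J z$ and $R:=R^{(\tau)}_{t,h}$, suppressing indices. Both $a$ and $R$ have zero $\G_t$-conditional mean, so
\[
\Cov(m^{(\tau)}\mid\G_t)=\E[aa^{\top}\mid\G_t]+\E[aR^{\top}\mid\G_t]+\E[Ra^{\top}\mid\G_t]+\E[RR^{\top}\mid\G_t].
\]
The first term is $\tau^2J\Lambda J^{\top}$. For unit vectors $x,y$, conditional Cauchy--Schwarz gives
\[
|x^{\top}\E[aR^{\top}\mid\G_t]y|\le(\E[\|a\|_2^2\mid\G_t])^{1/2}(\E[\|R\|_2^2\mid\G_t])^{1/2}.
\]
The first factor equals $\tau(\operatorname{tr}J\Lambda J^{\top})^{1/2}=O(\tau)$ and the second is $o(\tau)$, so the cross terms are $o(\tau^2)$ in operator norm. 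The last term has norm at most $\E[\|R\|_2^2\mid\G_t]=o(\tau^2)$. This proves \eqref{eq:local-systematic-app}, with the almost-sure qualifier inherited pointwise in $\omega$ because $J$ and $\Lambda$ are $\G_t$-measurable and fixed in $\tau$.

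\textbf{Projection coefficient.} Here $\widetilde\xi^{(\tau)}=\tau z$ and $\Lambda^{(\tau)}=\tau^2\Lambda$. Definition \eqref{eq:loading} gives
\[
L^{(\tau)}=\Cov(m^{(\tau)},\tau z\mid\G_t)(\tau^2\Lambda)^{-1}=\bigl(\tau^2J\Lambda+\tau\,\E[Rz^{\top}\mid\G_t]\bigr)\tau^{-2}\Lambda^{-1}.
\]
Hence
\[
L^{(\tau)}=J+\tau^{-1}\E[Rz^{\top}\mid\G_t]\Lambda^{-1}.
\]
By Cauchy--Schwarz, $\|\E[Rz^{\top}\mid\G_t]\|_2\le o(\tau)(\operatorname{tr}\Lambda)^{1/2}$. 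Since $\Lambda^{-1}$ is a fixed finite matrix, $L^{(\tau)}=J+o(1)$.

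\textbf{Orthogonal remainder.} We have $u^{(\tau)}=m^{(\tau)}-\mu^{(\tau)}-L^{(\tau)}\tau z=R-\tau(L^{(\tau)}-J)z$. Then
\[
\Omega^{(\tau)}\preceq 2\E[RR^{\top}\mid\G_t]+2\tau^2(L^{(\tau)}-J)\Lambda(L^{(\tau)}-J)^{\top},
\]
and both terms are $o(\tau^2)$. Since $\Omega^{(\tau)}\succeq0$, this upper bound controls it in operator norm. Alternatively, one can note directly that $\Omega^{(\tau)}=\Cov(m^{(\tau)}\mid\G_t)-L^{(\tau)}\Lambda^{(\tau)}L^{(\tau)\top}$ and subtract the two expansions.

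\textbf{Final display.} Under second-moment separation, Corollary~\ref{cor:separation-restrictions} gives $E^{(\tau)}=0$. Theorem~\ref{thm:risk-decomp} then gives
\[
Q^{(\tau)}=D^{(\tau)}+\Cov(m^{(\tau)}\mid\G_t),
\]
and inserting \eqref{eq:local-systematic-app} yields \eqref{eq:local-response-app}.

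The main obstacle is bookkeeping rather than depth. We must make sure the $o(\cdot)$ statements compose pathwise: each bound is a product of $\G_t$-measurable finite quantities independent of $\tau$ with the remainder norm from \eqref{eq:local-response-remainder-app}, so the almost-sure convergence transfers. We also need the standing assumption $\Lambda\succ0$ so that $\Lambda^{-1}$ is finite almost surely.
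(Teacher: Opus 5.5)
Your proposal is correct and follows the paper's proof essentially step for step. It uses the same four-term expansion of $\Cov(m^{(\tau)}\mid\G_t)$ with conditional Cauchy--Schwarz on the cross and remainder terms, the same direct computation $L^{(\tau)}=\Cov(m^{(\tau)},\tau z\mid\G_t)(\tau^2\Lambda)^{-1}=J+o(1)$, and the same appeal to Theorem~\ref{thm:risk-decomp} with $E=0$ for the final display. The only cosmetic difference is in bounding $\Omega^{(\tau)}$: the paper invokes minimality of the projection residual, which gives $\Omega^{(\tau)}\preceq\E[R^{(\tau)}R^{(\tau)\top}\mid\G_t]$ directly without your factor $2$, whereas you bound the explicit residual $u^{(\tau)}=R^{(\tau)}-\tau(L^{(\tau)}-J)z$; both routes give $o(\tau^2)$.
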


\begin{proof}
From \eqref{eq:local-response-expansion-app}, the conditional covariance
contains the leading term $\tau^2J\Lambda J^{\top}$, two cross terms involving
$R^{(\tau)}$, and $\Cov(R^{(\tau)}\mid\G_t)$. Conditional
Cauchy--Schwarz and \eqref{eq:local-response-remainder-app} make the cross
terms and the remainder covariance $o(\tau^2)$ in the stated conditional
operator norm, proving \eqref{eq:local-systematic-app}. Moreover,
\[
L^{(\tau)}
=
\Cov(m^{(\tau)},\tau z\mid\G_t)
(\tau^2\Lambda)^{-1}
=
J+o(1).
\]
The conditional linear projection minimizes residual second moment, so its
residual covariance is $o(\tau^2)$, giving
\eqref{eq:local-projection-app}. The final statement follows from
Theorem~\ref{thm:risk-decomp} with $E=0$.
\end{proof}

\begin{proposition}[Coordinate invariance of sensitivity constraints]
\label{prop:geometry-invariance-app}
Let $z'=\psi(z)$ be a $C^1$ local diffeomorphism. Transport the neutral state
directions by
$U'_t=D\psi(Z_t)U_t$ and write the same conditional-mean map in the new
coordinates, with Jacobian $S'_t$. Then
\begin{equation}
S'_t=S_tD\psi(Z_t)^{-1},
\qquad
U_t'^{\top}S_t'^{\top}
=
U_t^{\top}S_t^{\top}
=
\Gamma_t.
\label{eq:geometry-invariance-app}
\end{equation}
Hence the admissible portfolio subspace does not depend on the local
coordinate chart used for the selected decision state.
\end{proposition}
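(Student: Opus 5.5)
The plan is to prove the two identities in \eqref{eq:geometry-invariance-app} separately: the first is the chain rule, and the second follows from the first by linear algebra. I fix the decision state $Z_t$ and work on an open neighbourhood $W$ of $Z_t$ on which $\psi$ is a $C^1$ diffeomorphism onto its image. Write $\phi:=\psi^{-1}$ on $\psi(W)$ and $Z'_t:=\psi(Z_t)$. In the new coordinates the same conditional-mean map is $g'_t:=g_t\circ\phi$, so that $\mu_{t,h}=g_t(Z_t)=g'_t(Z'_t)$, and the new sensitivity is $S'_t:=Dg'_t(Z'_t)$.

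\emph{Step 1 (Jacobian).} By the chain rule, $S'_t=Dg_t(\phi(Z'_t))\,D\phi(Z'_t)=S_t\,D\phi(Z'_t)$. Differentiating $\phi\circ\psi=\mathrm{id}$ at $Z_t$ gives $D\phi(Z'_t)\,D\psi(Z_t)=I_m$. The local diffeomorphism hypothesis (inverse function theorem) ensures $D\psi(Z_t)$ is invertible, so $D\phi(Z'_t)=D\psi(Z_t)^{-1}$ and hence $S'_t=S_tD\psi(Z_t)^{-1}$.

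\emph{Step 2 (constraint matrix).} Writing $P:=D\psi(Z_t)$, we get $U_t'^{\top}S_t'^{\top}=(PU_t)^{\top}(S_tP^{-1})^{\top}=U_t^{\top}P^{\top}P^{-\top}S_t^{\top}=U_t^{\top}S_t^{\top}=\Gamma_t$. Equivalently, $S'_tU'_t=S_tP^{-1}PU_t=S_tU_t$. This says that the directional derivative of the conditional mean along a transported tangent vector is chart independent.

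\emph{Step 3 (admissible subspace).} Since $\Gamma_t$ is unchanged, so are the stacked matrix $A$ in \eqref{eq:A} and the right-hand side $b$. Therefore the set $\{w:Aw=b\}$ is unchanged, and so is $\ker A$, which is the object entering $M$ in Theorem~\ref{thm:markowitz}. This gives the final sentence of the proposition.

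There is no real obstacle here; the one delicate point is interpretive. One must read ``the same conditional-mean map in the new coordinates'' as $g_t\circ\psi^{-1}$, evaluated at $\psi(Z_t)$, and use the local invertibility of $D\psi(Z_t)$ to justify writing $P^{-1}$. I would state both points explicitly at the start of the proof.
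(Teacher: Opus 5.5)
Your proof is correct and takes the same route as the paper: the chain rule applied to $g_t\circ\psi^{-1}$ gives $S'_t=S_tD\psi(Z_t)^{-1}$, and substituting this into $\Gamma_t$ gives the invariance; your Step 3 spells out the consequence for $A$ and $\ker A$, which the paper leaves implicit. One small correction: invertibility of the square matrix $D\psi(Z_t)$ already follows from your identity $D\phi(Z'_t)D\psi(Z_t)=I_m$, so the appeal to the inverse function theorem is misplaced, since that theorem runs in the converse direction.
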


\begin{proof}
The first identity is the chain rule applied to $g_t\circ\psi^{-1}$.
Substitution into the definition of $\Gamma_t$ gives the second identity.
\end{proof}

\subsection{Projected Markowitz derivation}

\begin{proof}[Proof of Theorem~\ref{thm:markowitz}]
The Lagrangian for \eqref{eq:mv} is
\[
\mathcal L(w,\lambda)
=
\gamma w^{\top}\mu
-\frac12w^{\top}Qw
+\lambda^{\top}(b-Aw).
\]
The first-order conditions give
\[
Qw=\gamma\mu-A^{\top}\lambda,
\qquad
Aw=b.
\]
Substituting
$w=\gamma Q^{-1}\mu-Q^{-1}A^{\top}\lambda$
into the constraint gives
\[
\lambda
=
(AQ^{-1}A^{\top})^{-1}
(\gamma AQ^{-1}\mu-b).
\]
Substitution yields
$w^{\star}=w_0+\gamma M\mu$
with \eqref{eq:w0}--\eqref{eq:M}. If the columns of $N$ form a basis of $\ker A$, the feasible
displacement from any particular feasible point is $Nv$. Minimizing risk on
that tangent space gives the reduced inverse
$(N^{\top}QN)^{-1}$ and therefore \eqref{eq:M-tangent}. Positive
definiteness of $Q$ gives uniqueness.

The frontier identities follow from
$Aw_0=b$, $AM=0$ and $MQM=M$. In particular,
$w_0^{\top}QM=0$ and
\[
(w_0+\gamma M\mu)^{\top}Q(w_0+\gamma M\mu)
=
\sigma_0^2+\gamma^2\mu^{\top}M\mu.
\]
Together with
\[
\mu^{\top}(w_0+\gamma M\mu)
=
\mu_0+\gamma\Delta,
\qquad
\Delta=\mu^{\top}M\mu,
\]
this gives the stated frontier relation after eliminating $\gamma$ when
$\Delta>0$. Finally, using
$M=N(N^{\top}QN)^{-1}N^{\top}$ and writing every feasible zero-cost
perturbation as $Nv$ gives
\[
\sup_{v\in\ker A,\;v\ne0}
\frac{|v^{\top}\mu|}
{\sqrt{v^{\top}Qv}}
=
\sqrt{\mu^{\top}M\mu}
=
\sqrt{\Delta},
\]
which is the Rayleigh-quotient representation in \eqref{eq:IR}.
\end{proof}

\begin{proposition}[Residual-risk floor and mean stability]
\label{prop:floor-stability-app}
If the representation is a second-moment portfolio separator and
$D\succeq\underline\sigma^2I$ with $\underline\sigma^2>0$, then
\[
Q\succeq\underline\sigma^2I,
\qquad
\|Q^{-1}\|_2\le\underline\sigma^{-2},
\qquad
\|M\|_2\le\underline\sigma^{-2},
\]
and, holding $(Q,A,b)$ fixed,
\[
\|w^{\star}(\mu+\delta\mu)-w^{\star}(\mu)\|_2
\le
\gamma\underline\sigma^{-2}\|\delta\mu\|_2.
\]
\end{proposition}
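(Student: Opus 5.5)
The argument runs from the decomposition in Theorem~\ref{thm:risk-decomp} to the projected operator of Theorem~\ref{thm:markowitz}. By Corollary~\ref{cor:separation-restrictions}, second-moment separation gives $E=0$. Hence
\[
Q=D+L\Lambda L^{\top}+\Omega ,
\]
with $L\Lambda L^{\top}\succeq0$ (since $\Lambda\succ0$) and $\Omega\succeq0$. Therefore $Q\succeq D\succeq\underline\sigma^2 I$. Because $Q$ is symmetric, its smallest eigenvalue is at least $\underline\sigma^2>0$. This makes $Q$ invertible, and operator monotonicity of the inverse on positive definite matrices, or simply the eigenvalue bound, gives $\|Q^{-1}\|_2\le\underline\sigma^{-2}$.

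Next I will use the tangent representation \eqref{eq:M-tangent}. I choose $N$ with orthonormal columns spanning $\ker A$; this loses nothing, because $M$ does not depend on the choice of basis. Then
\[
N^{\top}QN\succeq\underline\sigma^2 N^{\top}N=\underline\sigma^2 I ,
\]
so $\|(N^{\top}QN)^{-1}\|_2\le\underline\sigma^{-2}$. Since $\|N\|_2=1$, this yields
\[
\|M\|_2\le\|N\|_2^2\,\|(N^{\top}QN)^{-1}\|_2\le\underline\sigma^{-2}.
\]
The only point needing care is the basis-independence of $M$. I will check it directly: for an invertible $G$, replacing $N$ by $NG$ gives
\[
NG\,(G^{\top}N^{\top}QNG)^{-1}G^{\top}N^{\top}=N(N^{\top}QN)^{-1}N^{\top}.
\]
Alternatively, \eqref{eq:M} already shows that $M$ is determined by $(Q,A)$ alone.

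For the stability bound, hold $(Q,A,b)$ fixed. Then $w_0$ and $M$ in \eqref{eq:w0}--\eqref{eq:M} do not depend on $\mu$, and \eqref{eq:wstar} is affine in $\mu$. So
\[
w^{\star}(\mu+\delta\mu)-w^{\star}(\mu)=\gamma M\,\delta\mu ,
\]
and the norm bound on $M$ gives $\|\cdot\|_2\le\gamma\underline\sigma^{-2}\|\delta\mu\|_2$.

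I do not expect a real obstacle here. The one step to state explicitly is that $E=0$ is what makes the remaining terms of \eqref{eq:master} positive semidefinite: if $E\neq0$, its indefiniteness could push $Q$ below $D$ and the floor argument would fail.
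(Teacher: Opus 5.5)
Your proof is correct. Its first and last steps coincide with the paper's: $E=0$ turns \eqref{eq:master} into $Q\succeq D\succeq\underline\sigma^2 I$, and the stability bound follows from $w^\star(\mu+\delta\mu)-w^\star(\mu)=\gamma M\,\delta\mu$.

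The difference is in how you bound $\|M\|_2$. The paper rewrites \eqref{eq:M} in whitened form, $M=Q^{-1/2}PQ^{-1/2}$, where $P$ is the orthogonal projector onto $\ker(AQ^{-1/2})$. This gives $0\preceq M\preceq Q^{-1}$ for every $Q\succ0$, and hence $\|M\|_2\le\|Q^{-1}\|_2$, before the floor is used. No basis of $\ker A$ has to be chosen, and the comparison with $Q^{-1}$ is reusable independently of any floor.

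You instead compress $Q$ onto the feasible subspace through \eqref{eq:M-tangent} with an orthonormal $N$. After the basis-independence check, which you handle correctly, this gives the exact value $\|M\|_2=1/\lambda_{\min}(N^\top QN)$. That is slightly more informative. It shows that $\|M\|_2\le\underline\sigma^{-2}$, and hence the mean-stability bound, needs the floor only on the feasible directions, i.e.\ $N^\top QN\succeq\underline\sigma^2 I$. That condition is implied by $Q\succeq\underline\sigma^2 I$ but is strictly weaker in general, since $Q$ may be small along directions in the row space of $A$. The trade-off is one extra bookkeeping step (choosing and justifying an orthonormal basis) in exchange for a sharper, subspace-local condition.
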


\begin{proof}
Second-moment separation gives $E=0$, so \eqref{eq:master} yields
$Q\succeq D$. Also
\[
M=Q^{-1/2}PQ^{-1/2},
\]
where $P$ is the orthogonal projector onto
$\ker(AQ^{-1/2})$, so
$\|M\|_2\le\|Q^{-1}\|_2$. The last inequality follows from
\[
w^{\star}(\mu+\delta\mu)-w^{\star}(\mu)
=
\gamma M\delta\mu.
\]
\end{proof}

\begin{proposition}[Structured solve at fixed dimension]
\label{prop:structured-solver-app}
Suppose the affine representation is exact,
$Q=D+L\Lambda L^{\top}$, with $D$ diagonal positive definite,
$L\in\R^{n\times k}$ and $p$ equality constraints in $Aw=b$. Then the
projected Markowitz solution can be computed without forming a dense
$n\times n$ inverse, with arithmetic cost
\begin{equation}
O\!\left(n(k+p)^2+(k+p)^3\right)
\label{eq:structured-complexity-app}
\end{equation}
and working memory $O(n(k+p))$. The count assumes standard dense arithmetic
for the $k$- and $p$-dimensional blocks; in the intended regime $k$ and $p$
are small relative to $n$.
\end{proposition}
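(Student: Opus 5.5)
The approach is to avoid $Q^{-1}$ altogether and apply every needed action of $Q^{-1}$ through the Woodbury identity \eqref{eq:woodbury}. Theorem~\ref{thm:markowitz} then finishes the computation with small dense systems. Since $Q=D+L\Lambda L^{\top}$ with $D$ diagonal positive definite and $\Lambda\succ0$, the solution $w^{\star}=w_0+\gamma M\mu$ only requires the products $Q^{-1}\mu$ and $Q^{-1}A^{\top}$. After that come the $p\times p$ Gram matrix $AQ^{-1}A^{\top}$, the vector $AQ^{-1}\mu$, and a few matrix--vector operations. The dominant work is therefore applying $Q^{-1}$ to the $p+1$ right-hand sides collected in $B:=[\mu,\;A^{\top}]\in\R^{n\times(p+1)}$.

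\textbf{Step 1 (core factorization).} Form $D^{-1}L$ in $O(nk)$ and the capacitance matrix $C:=\Lambda^{-1}+L^{\top}D^{-1}L$ in $O(nk^2)$. Inverting $\Lambda$ and Cholesky-factorizing $C$ cost $O(k^3)$; $C\succ0$ because $\Lambda^{-1}\succ0$.

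\textbf{Step 2 (apply $Q^{-1}$ to $B$).} Compute $D^{-1}B$ in $O(np)$ and $L^{\top}D^{-1}B$ in $O(nkp)$. Solve with $C$ for $p+1$ right-hand sides in $O(k^2p)$. Form $Y:=Q^{-1}B=D^{-1}B-D^{-1}L\,C^{-1}L^{\top}D^{-1}B$ in $O(nkp)$. Validity of this step is exactly the Woodbury formula, which the paper already records.

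\textbf{Step 3 (constraint block).} Compute $G:=AQ^{-1}A^{\top}=A\,Y_{A}$, where $Y_A$ collects the columns of $Y$ corresponding to $A^{\top}$. This costs $O(np^2)$. Also compute $AQ^{-1}\mu$ in $O(np)$. $G$ is positive definite because $A$ has full row rank and $Q\succ0$, so its Cholesky factorization costs $O(p^3)$. Then
\[
w^{\star}=Q^{-1}A^{\top}G^{-1}\bigl(b-\gamma AQ^{-1}\mu\bigr)+\gamma Q^{-1}\mu ,
\]
which takes an $O(p^2)$ solve and an $O(np)$ product. This agrees with \eqref{eq:wstar} after expanding \eqref{eq:w0}--\eqref{eq:M}. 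The scalars $\mu_0$, $\sigma_0^2$ and $\Delta$ follow from the same cached quantities in $O(np+p^2)$: for example, $\Delta=\mu^{\top}Q^{-1}\mu-(AQ^{-1}\mu)^{\top}G^{-1}AQ^{-1}\mu$.

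\textbf{Step 4 (totals and memory).} Summing the costs gives $O(nk^2+nkp+np^2+k^3+p^3)$, which is bounded by $O(n(k+p)^2+(k+p)^3)$. The stored objects are $D$, $L$, $D^{-1}L$, $A$, $Y$ and the small factors, for $O(n(k+p)+(k+p)^2)$ memory. This is $O(n(k+p))$ in the stated regime $k+p\lesssim n$, and no $n\times n$ array is ever formed.

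There is no real mathematical obstacle. The only care needed is bookkeeping: every $n$-dimensional operation must be a product with at most $k+p$ columns, and the budget row must be counted in $p$. One should also confirm positive definiteness of both small systems, $C$ and $G$, so that Cholesky applies. The caveat in the statement about standard dense arithmetic for the small blocks accounts for the $(k+p)^3$ term.
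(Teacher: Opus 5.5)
Your proposal is correct and follows the same route as the paper's proof. The paper also factors the $k\times k$ matrix $\Lambda^{-1}+L^{\top}D^{-1}L$ once, uses Woodbury to obtain $Q^{-1}\mu$ and $Q^{-1}A^{\top}$, and finishes with small dense factorizations; your version gives a more explicit accounting of the same steps, including the $O((k+p)^2)$ storage for the small factors that the paper's memory bound absorbs under $k+p\lesssim n$.
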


\begin{proof}
Factor the $k\times k$ matrix
$\Lambda^{-1}+L^{\top}D^{-1}L$ once. Woodbury then gives
$Q^{-1}\mu$ and the $p$ columns of $Q^{-1}A^{\top}$. Forming the required
cross-products costs $O(n(k+p)^2)$; the remaining dense factorizations
involve matrices of dimension at most $k+p$. The stated memory bound follows
from storing $L$, $A^{\top}$ and the corresponding transformed columns.
\end{proof}

\subsection{Finite covariance perturbations and diagnostic bound}

\begin{proof}[Proof of Theorem~\ref{thm:perturbation}]
Let the columns of $N$ form a basis of $\ker A$. By \eqref{eq:M-tangent},
\[
M
=
N(N^{\top}\bar QN)^{-1}N^{\top},
\qquad
M_R
=
N\{N^{\top}(\bar Q+R)N\}^{-1}N^{\top}.
\]
The resolvent identity for the two reduced inverse matrices gives
\[
M_R-M
=
-M R M_R
=
-M_R R M.
\]

For the portfolio weights, subtract the two KKT systems. Since
\[
d:=w_R^{\star}-w^{\star}\in\ker A,
\]
the reduced inverse on the feasible tangent space gives
\[
d=-M_R R w^{\star}.
\]
Equivalently, reversing the roles of $\bar Q$ and $\bar Q+R$ gives
\[
d=-M R w_R^{\star}.
\]
Finally,
\[
\Delta_R-\Delta
=
\mu^{\top}(M_R-M)\mu
=
-(M\mu)^{\top}R(M_R\mu).
\]
This proves the exact finite identities. For the local formulas, replace $R$ by $\tau R$. The resolvent identity
implies
\[
M_{\tau R}
=
M-\tau M R M+O(\tau^2),
\]
whenever the perturbation remains inside the positive-definite domain.
Substitution into the exact portfolio and frontier identities gives the
first-order expansions stated in the theorem. For the finite norm bound, the exact identity
\[
d=-M_R R w^{\star}
\]
and the floor condition imply
\[
\|M_R\|_2
\le
\frac{1}{\underline\sigma^2-\|R\|_2}
\]
whenever $\|R\|_2<\underline\sigma^2$, because Weyl's inequality gives
\[
\bar Q+R
\succeq
(\underline\sigma^2-\|R\|_2)I.
\]
Consequently,
\[
\|w_R^{\star}-w^{\star}\|_2
\le
\frac{\|R\|_2}
{\underline\sigma^2-\|R\|_2}
\|w^{\star}\|_2,
\]
which is \eqref{eq:w-bound}. Similarly,
\[
|\Delta_R-\Delta|
\le
\|M\|_2\,
\|R\|_2\,
\|M_R\|_2\,
\|\mu\|_2^2
\le
\frac{\|R\|_2}
{\underline\sigma^2(\underline\sigma^2-\|R\|_2)}
\|\mu\|_2^2,
\]
giving \eqref{eq:delta-bound}. The alternative generic resolvent condition
$\|M R\|_2<1$ yields, from
$d=-M R(w^{\star}+d)$,
\[
\|d\|_2
\le
\frac{\|M R\|_2}{1-\|M R\|_2}
\|w^{\star}\|_2.
\]
The floor condition
$\|R\|_2<\underline\sigma^2$ is a directly checkable sufficient condition
for the positive-definite perturbation bounds used above; it should not be
confused with the generic resolvent condition. Finally, set $\bar Q=Q^0$ and $R=\Omega\succeq0$. On the feasible tangent
space,
\[
N^{\top}(Q^0+\Omega)N
\succeq
N^{\top}Q^0N.
\]
Inversion reverses the Loewner order, hence
\[
M_{\mathrm{true}}\preceq M_0,
\]
and therefore
\[
\Delta_{\mathrm{true}}
=
\mu^{\top}M_{\mathrm{true}}\mu
\le
\mu^{\top}M_0\mu
=
\Delta_0.
\]
This proves \eqref{eq:omega-order}.
\end{proof}

\begin{proof}[Proof of Proposition~\ref{prop:eps-bound}]
Because $E_{ii}=0$,
\[
\|E\|_F^2
=
\sum_{i\ne j}E_{ij}^2
\le
(\epsilon^{\mathrm{cov}})^2
\sum_{i\ne j}s_i^2s_j^2
=
(\epsilon^{\mathrm{cov}})^2
\left\{
\left(\sum_i s_i^2\right)^2-\sum_i s_i^4
\right\}.
\]
Use $\|E\|_2\le\|E\|_F$ and take square roots.
\end{proof}

\begin{corollary}[Joint local mean and covariance perturbation]
\label{cor:joint-moment-app}
Let
$\bar Q_\tau=\bar Q+\tau R$ and
$\mu_\tau=\mu+\tau a$, with $(A,b,\gamma)$ fixed. At $\tau=0$,
\begin{align}
\left.\frac{d w_\tau^{\star}}{d\tau}\right|_{\tau=0}
&=
\gamma M a-M R w^{\star},
\label{eq:joint-weight-app}\\
\left.\frac{d\Delta_\tau}{d\tau}\right|_{\tau=0}
&=
2a^{\top}M\mu-(M\mu)^{\top}R(M\mu).
\label{eq:joint-delta-app}
\end{align}
Thus mean and covariance error enter through distinct first-order channels.
\end{corollary}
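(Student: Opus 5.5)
The approach is to differentiate the closed-form solution of Theorem~\ref{thm:markowitz} along the joint path $(\mu_\tau,\bar Q_\tau)=(\mu+\tau a,\bar Q+\tau R)$ and to take the $\tau$-derivative of the covariance-dependent objects from the finite identities of Theorem~\ref{thm:perturbation}. Throughout, $(A,b,\gamma)$ are held fixed and $\tau$ is restricted to a neighbourhood of zero on which $\bar Q_\tau\succ0$. On that neighbourhood $M_\tau$ and $w_{0,\tau}$ are rational, hence smooth, functions of $\tau$.

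The first step is the derivative of $M_\tau$. Applying \eqref{eq:M-resolvent} with perturbation $\tau R$ gives $M_\tau-M=-\tau MRM_\tau$. Dividing by $\tau$ and using continuity of $M_\tau$ yields $\dot M:=dM_\tau/d\tau|_{0}=-MRM$.

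The second step is the weight derivative. I write $w^\star_\tau=w_{0,\tau}+\gamma M_\tau\mu_\tau$. The weight identity \eqref{eq:w-pert} covers a pure covariance perturbation with $\mu$ fixed, so it gives $d w^\star/d\tau$ along $\bar Q_\tau$ alone as $-MRw^\star$. The mean enters $w^\star$ only through the linear term $\gamma M\mu$, whose $\mu$-direction derivative is $\gamma Ma$. Because $w^\star$ is jointly differentiable in $(\mu,\bar Q)$, the chain rule adds the two partial derivatives and gives \eqref{eq:joint-weight-app}.

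Alternatively, one can verify this directly. Differentiating the KKT system $\bar Q_\tau w_\tau=\gamma\mu_\tau-A^\top\lambda_\tau$, $Aw_\tau=b$ gives $\bar Q\dot w=\gamma a-Rw^\star-A^\top\dot\lambda$ together with $\dot w\in\ker A$. Writing $\dot w=Nv$ and premultiplying by $N^\top$ eliminates $\dot\lambda$, so $\dot w=N(N^\top\bar QN)^{-1}N^\top(\gamma a-Rw^\star)$. By \eqref{eq:M-tangent} this equals $\gamma Ma-MRw^\star$.

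The third step is the frontier potential, $\Delta_\tau=\mu_\tau^\top M_\tau\mu_\tau$. The product rule and symmetry of $M$ give $\dot\Delta=2a^\top M\mu+\mu^\top\dot M\mu=2a^\top M\mu-(M\mu)^\top R(M\mu)$, which is \eqref{eq:joint-delta-app}. Equivalently, the covariance part agrees with \eqref{eq:delta-local}.

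The only delicate point is justifying differentiability, so that the partial derivatives may be added and limits interchanged. Positive definiteness on a neighbourhood makes $(N^\top\bar Q_\tau N)^{-1}$ smooth, and the representation \eqref{eq:M-tangent} then settles this at once.
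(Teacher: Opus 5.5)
Your proof is correct and follows essentially the same route as the paper. Both use the resolvent identity of Theorem~\ref{thm:perturbation} to get $\dot M=-MRM$ and then differentiate $w^\star_\tau=w_{0,\tau}+\gamma M_\tau\mu_\tau$ and $\Delta_\tau=\mu_\tau^\top M_\tau\mu_\tau$; the $\Delta$ step is identical, and the only difference in bookkeeping is that the paper expands $w_{0,\tau}$ and $\gamma M_\tau\mu_\tau$ separately and recombines them into $-MRw^\star$, whereas you split the path into its covariance and mean directions and apply \eqref{eq:w-pert} to the whole of $w^\star$ at once (your KKT differentiation is a valid independent check).
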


\begin{proof}
Apply Theorem~\ref{thm:perturbation} with covariance perturbation $\tau R$.
Its resolvent identity gives
\[
M_\tau
=
M-\tau M R M+O(\tau^2).
\]
The corresponding minimum-risk feasible component satisfies
\[
w_{0,\tau}
=
w_0-\tau M R w_0+O(\tau^2),
\]
while
\[
\gamma M_\tau\mu_\tau
=
\gamma M\mu
+
\tau\{\gamma M a-\gamma M R M\mu\}
+
O(\tau^2).
\]
Since $w^{\star}=w_0+\gamma M\mu$, summing these terms gives
\[
w_\tau^{\star}
=
w^{\star}
+
\tau\{\gamma M a-M R w^{\star}\}
+
O(\tau^2),
\]
which proves \eqref{eq:joint-weight-app}. For the frontier potential,
\[
\Delta_\tau
=
\mu_\tau^{\top}M_\tau\mu_\tau.
\]
Using
$\mu_\tau=\mu+\tau a$ and
$M_\tau=M-\tau M R M+O(\tau^2)$ gives
\[
\Delta_\tau
=
\Delta
+
\tau
\left\{
2a^{\top}M\mu
-
(M\mu)^{\top}R(M\mu)
\right\}
+
O(\tau^2),
\]
which proves \eqref{eq:joint-delta-app}.
\end{proof}

\section{Additional controlled results}\label{app:experiments}

\subsection{Full covariance-estimation grid}

Table~\ref{tab:e2-full} reports the complete covariance experiment rather
than only the $T=252$ slice displayed in Figure~\ref{fig:e2}. It includes
under- and over-specified PCA dimensions to show that much of the gap to the
structured estimator disappears when PCA is supplied with the correct factor
dimension.

\begin{table}[H]
\centering
\caption{Controlled covariance experiment: out-of-sample annualized GMV
volatility and variance calibration. There are 250 Monte Carlo replications
per cell and 252 out-of-sample observations.}
\label{tab:e2-full}
\scriptsize
\resizebox{\textwidth}{!}{\begin{tabular}{@{}llccccccc@{}}
\hline
$n$ & $T$ & Sample & LW & PCA-$(m{-}2)$ & PCA-$m$ & PCA-$(m{+}2)$ & Structured & Oracle\\ \hline
50 & 126 & 3.91 & 3.44 & 4.30 & 3.15 & 3.20 & 3.15 & 3.05\\
50 & 252 & 3.42 & 3.31 & 4.30 & 3.11 & 3.15 & 3.10 & 3.05\\
50 & 504 & 3.24 & 3.22 & 4.33 & 3.11 & 3.15 & 3.10 & 3.08\\
100 & 126 & 4.68 & 2.55 & 3.04 & 2.15 & 2.16 & 2.15 & 2.08\\
100 & 252 & 2.68 & 2.40 & 3.03 & 2.10 & 2.11 & 2.10 & 2.07\\
100 & 504 & 2.31 & 2.26 & 2.99 & 2.09 & 2.09 & 2.08 & 2.07\\
200 & 126 & 276.90 & 1.87 & 2.16 & 1.48 & 1.49 & 1.49 & 1.44\\
200 & 252 & 3.21 & 1.87 & 2.13 & 1.46 & 1.46 & 1.46 & 1.44\\
200 & 504 & 1.85 & 1.71 & 2.08 & 1.45 & 1.45 & 1.45 & 1.44\\
\hline
\multicolumn{9}{@{}l}{\emph{Calibration: realized/predicted variance (1 = perfect)}}\\
50 & 252 & 1.57 & 1.10 & 1.12 & 1.13 & 1.20 & 1.04 & 1.00\\
100 & 252 & 2.76 & 1.44 & 1.12 & 1.08 & 1.12 & 1.04 & 0.99\\
200 & 252 & 25.52 & 2.43 & 1.10 & 1.07 & 1.09 & 1.04 & 1.00\\
\hline
\end{tabular}
}
\end{table}

\subsection{Finite-perturbation calibration}

The response-error design uses
\[
m_\rho
=
\mu+B\eta
+
\sqrt{\frac{\rho\underline\sigma^2}{2}}\,c
\left[
\left(
\frac{\eta_1}{\sqrt{\Lambda_{11}}}
\right)^2-1
\right],
\]
where $c$ is aligned with the baseline admissible speculative direction.
Since the centred squared Gaussian term has variance two and is orthogonal
to the linear innovation span,
\[
\Omega_\rho
=
\rho\underline\sigma^2cc^{\top},
\qquad
E=0.
\]

For the screening-error design, $\Omega=0$ and a two-asset zero-diagonal
residual block is scaled to have spectral norm
$\rho\underline\sigma^2$. Using both signs shows that $E$ has no one-sided
effect on $\Delta$.

\begin{thebibliography}{99}
\bibitem[BarclayHedge(2026)]{BarclayHedge2026}
BarclayHedge, BarclayHedge Hedge Fund Indices. BarclayHedge, 2026.
Available at: \url{https://www.barclayhedge.com/}.
\bibitem[Best and Grauer(1991)]{BestGrauer1991} Best, M.J. and Grauer, R.R., On the sensitivity of mean-variance-efficient portfolios to changes in asset means: Some analytical and computational results. \textit{Rev. Financ. Stud.}, 1991, \textbf{4}, 315--342.
\bibitem[Bloomberg(2026)]{Bloomberg2026}
Bloomberg L.P., Financial market data, 2026.
\bibitem[Bongiorno and Challet(2024)]{BongiornoChallet2024} Bongiorno, C. and Challet, D., Covariance matrix filtering and portfolio optimisation: the average oracle vs non-linear shrinkage and all the variants of DCC-NLS. \textit{Quant. Finance}, 2024, \textbf{24}, 1227--1234.
\bibitem[Chamberlain and Rothschild(1983)]{ChamberlainRothschild1983} Chamberlain, G. and Rothschild, M., Arbitrage, factor structure, and mean-variance analysis on large asset markets. \textit{Econometrica}, 1983, \textbf{51}, 1281--1304.
\bibitem[Chiang(2015)]{Chiang2015} Chiang, I.-H.E., Modern portfolio management with conditioning information. \textit{J. Empir. Finance}, 2015, \textbf{33}, 114--134.
\bibitem[DeMiguel et al.(2009)]{DeMiguelGarlappiUppal2009} DeMiguel, V., Garlappi, L. and Uppal, R., Optimal versus naive diversification: How inefficient is the 1/N portfolio strategy? \textit{Rev. Financ. Stud.}, 2009, \textbf{22}, 1915--1953.
\bibitem[Dom et al.(2025)]{DomHowardJansenLohre2025} Dom, M.S., Howard, C., Jansen, M. and Lohre, H., Beyond GMV: the relevance of covariance matrix estimation for risk-based portfolio construction. \textit{Quant. Finance}, 2025, \textbf{25}, 403--419.
\bibitem[Engle(2002)]{Engle2002} Engle, R., Dynamic conditional correlation: A simple class of multivariate generalized autoregressive conditional heteroskedasticity models. \textit{J. Bus. Econ. Stat.}, 2002, \textbf{20}, 339--350.
\bibitem[Fan et al.(2008)]{FanFanLv2008} Fan, J., Fan, Y. and Lv, J., High dimensional covariance matrix estimation using a factor model. \textit{J. Econometrics}, 2008, \textbf{147}, 186--197.
\bibitem[Fan et al.(2013)]{FanLiaoMincheva2013} Fan, J., Liao, Y. and Mincheva, M., Large covariance estimation by thresholding principal orthogonal complements. \textit{J. R. Stat. Soc. B}, 2013, \textbf{75}, 603--680.
\bibitem[Ferson and Siegel(2001)]{FersonSiegel2001} Ferson, W.E. and Siegel, A.F., The efficient use of conditioning information in portfolios. \textit{J. Finance}, 2001, \textbf{56}, 967--982.
\bibitem[Ferson et al.(2006)]{FersonSiegelXu2006} Ferson, W.E., Siegel, A.F. and Xu, P., Mimicking portfolios with conditioning information. \textit{J. Financ. Quant. Anal.}, 2006, \textbf{41}, 607--635.
\bibitem[Garlappi et al.(2007)]{GarlappiUppalWang2007} Garlappi, L., Uppal, R. and Wang, T., Portfolio selection with parameter and model uncertainty: A multi-prior approach. \textit{Rev. Financ. Stud.}, 2007, \textbf{20}, 41--81.
\bibitem[Hansen and Jagannathan(1991)]{HansenJagannathan1991} Hansen, L.P. and Jagannathan, R., Implications of security market data for models of dynamic economies. \textit{J. Polit. Econ.}, 1991, \textbf{99}, 225--262.
\bibitem[Hansen and Richard(1987)]{HansenRichard1987} Hansen, L.P. and Richard, S.F., The role of conditioning information in deducing testable restrictions implied by dynamic asset pricing models. \textit{Econometrica}, 1987, \textbf{55}, 587--613.
\bibitem[Jagannathan and Ma(2003)]{JagannathanMa2003} Jagannathan, R. and Ma, T., Risk reduction in large portfolios: Why imposing the wrong constraints helps. \textit{J. Finance}, 2003, \textbf{58}, 1651--1683.
\bibitem[Kan and Zhou(2007)]{KanZhou2007} Kan, R. and Zhou, G., Optimal portfolio choice with parameter uncertainty. \textit{J. Financ. Quant. Anal.}, 2007, \textbf{42}, 621--656.
\bibitem[Kelly et al.(2019)]{KellyPruittSu2019} Kelly, B., Pruitt, S. and Su, Y., Characteristics are covariances: A unified model of risk and return. \textit{J. Financ. Econ.}, 2019, \textbf{134}, 501--524.
\bibitem[Ledoit and Wolf(2004)]{LedoitWolf2004} Ledoit, O. and Wolf, M., A well-conditioned estimator for large-dimensional covariance matrices. \textit{J. Multivar. Anal.}, 2004, \textbf{88}, 365--411.
\bibitem[Ledoit and Wolf(2022)]{LedoitWolf2022} Ledoit, O. and Wolf, M., Quadratic shrinkage for large covariance matrices. \textit{Bernoulli}, 2022, \textbf{28}, 1519--1543.
\bibitem[Lu and Wang(2025)]{LuWang2025} Lu, W. and Wang, Y., A simple realized factor-based portfolio: improving minimum variance portfolio performance by incorporating low-frequency betas. \textit{Quant. Finance}, 2025, \textbf{25}, 1315--1332.
\bibitem[Markowitz(1952)]{Markowitz1952} Markowitz, H., Portfolio selection. \textit{J. Finance}, 1952, \textbf{7}, 77--91.
\bibitem[Merton(1973)]{Merton1973} Merton, R.C., An intertemporal capital asset pricing model. \textit{Econometrica}, 1973, \textbf{41}, 867--887.
\bibitem[Michaud(1989)]{Michaud1989} Michaud, R.O., The Markowitz optimization enigma: Is `optimized' optimal? \textit{Financ. Anal. J.}, 1989, \textbf{45}, 31--42.
\bibitem[Pearl(1988)]{Pearl1988} Pearl, J., \textit{Probabilistic Reasoning in Intelligent Systems: Networks of Plausible Inference}, 1988 (Morgan Kaufmann: San Mateo, CA).
\bibitem[Pearl(2009)]{Pearl2009} Pearl, J., \textit{Causality: Models, Reasoning, and Inference}, 2nd ed., 2009 (Cambridge University Press: Cambridge).
\bibitem[Pe{\~n}aranda and Sentana(2016)]{PenarandaSentana2016} Pe{\~n}aranda, F. and Sentana, E., Duality in mean-variance frontiers with conditioning information. \textit{J. Empir. Finance}, 2016, \textbf{38}, 762--785.
\bibitem[Pereira(2021)]{Pereira2021} Pereira, C.V., Portfolio efficiency with high-dimensional data as conditioning information. \textit{Int. Rev. Financ. Anal.}, 2021, \textbf{77}, 101811.
\bibitem[Peters et al.(2016)]{PetersBuhlmannMeinshausen2016} Peters, J., B{\"u}hlmann, P. and Meinshausen, N., Causal inference by using invariant prediction: identification and confidence intervals. \textit{J. R. Stat. Soc. B}, 2016, \textbf{78}, 947--1012.
\bibitem[Peters et al.(2017)]{PetersJanzingScholkopf2017} Peters, J., Janzing, D. and Sch{\"o}lkopf, B., \textit{Elements of Causal Inference: Foundations and Learning Algorithms}, 2017 (MIT Press: Cambridge, MA).
\bibitem[Reichenbach(1956)]{Reichenbach1956} Reichenbach, H., \textit{The Direction of Time}, edited by M. Reichenbach, 1956 (University of California Press: Berkeley and Los Angeles).
\bibitem[Rodr{\'i}guez Dom{\'i}nguez(2023)]{RD2023MLWA} Rodr{\'i}guez Dom{\'i}nguez, A., Portfolio optimization based on neural networks sensitivities from assets dynamics respect common drivers. \textit{Mach. Learn. Appl.}, 2023, \textbf{11}, 100447.
\bibitem[Rodr{\'i}guez Dom{\'i}nguez(2025)]{RD2025Causal} Rodr{\'i}guez Dom{\'i}nguez, A., Causal portfolio optimization: principles and sensitivity-based solutions. arXiv:2504.05743, 2025.
\bibitem[Shah and Peters(2020)]{ShahPeters2020} Shah, R.D. and Peters, J., The hardness of conditional independence testing and the generalised covariance measure. \textit{Ann. Stat.}, 2020, \textbf{48}, 1514--1538.
\bibitem[Tiingo(2026)]{Tiingo2026}
Tiingo, Tiingo Daily End-of-Day Prices. Tiingo, 2026.
Available at: \url{https://www.tiingo.com/}.
\bibitem[Wang et al.(2026)]{WangPapailiasVentre2026} Wang, C., Papailias, F. and Ventre, C., Dynamic estimation of sample covariance matrices via hierarchical clustering. \textit{Quant. Finance}, 2026, \textbf{26}, 41--62.\end{thebibliography}
\end{document}